\documentclass[sigconf, nonacm, pdfa]{acmart}
\usepackage[a-2b]{pdfx}
\usepackage{hyperref}
\usepackage{amsmath}
\usepackage{enumitem}
\usepackage[update,prepend]{epstopdf}
\usepackage{bbding}
\usepackage{wasysym}
\usepackage{tabularx}
\usepackage{multirow}
\usepackage{ifsym} 
\usepackage{color}
\usepackage{float}
\usepackage{threeparttable}
\usepackage{subfig}
\usepackage{stfloats}
\usepackage{pdfpages}
\usepackage{graphicx}
\usepackage[inkscapelatex=false]{svg}
\usepackage[ruled,linesnumbered, vlined]{algorithm2e}
\usepackage{makecell}
\usepackage[figuresright]{rotating}
\usepackage{eso-pic}
\usepackage{xcolor}
\usepackage[linesnumbered,ruled,vlined]{algorithm2e}
\SetKwInput{KwParam}{Param}

\newcommand{\myparagraph}[1]{\vspace{1ex}\noindent\underline{\it #1.}\xspace}

\newif\ifproofread
\newif\ifextendedreport

\newcommand{\changemarker}[1]{%
\ifproofread
\textcolor{blue}{#1}%
\else
#1%
\fi
}

%% The following content must be adapted for the final version
% paper-specific
\newcommand\vldbdoi{10.14778/3705829.3705855}
\newcommand\vldbpages{413 - 425}
% issue-specific
\newcommand\vldbvolume{18}
\newcommand\vldbissue{2}
\newcommand\vldbyear{2024}
% should be fine as it is
\newcommand\vldbauthors{\authors}
\newcommand\vldbtitle{\shorttitle} 
% leave empty if no availability url should be set
\newcommand\vldbavailabilityurl{https://github.com/teijyogen/privbench}
% whether page numbers should be shown or not, use 'plain' for review versions, 'empty' for camera ready
\newcommand\vldbpagestyle{empty} 

\usepackage{amsmath,amsfonts}
\usepackage{weiwMath}
\makeatletter

\newcommand{\Rmnum}[1]{\expandafter\@slowromancap\romannumeral #1@}
\makeatother
\usepackage{graphicx}
\usepackage{textcomp}
\usepackage{xcolor}
\usepackage{float}
\usepackage{url}
\usepackage{pifont}

\newtheorem{theorem}{Theorem}
\newtheorem{definition}{Definition}

\newtheorem{corollary}{Corollary}
\newtheorem{lemma}{Lemma}
\newtheorem{problem}{Problem}

\usepackage{amsmath}

            %             \newcommand{\DEL}{\_}

 %% partition

%% var
 %% partition
 %% partition

%% Pigeon-ring paper commands.

%\newcommand{\basicpigeon}{Basic Pigeonhole\xspace}

%-------------------------------------------
% commands for various datasets
%-------------------------------------------

%-------------------------------------------
% commands for various methods
%-------------------------------------------
       %{BClean\xspace}
    %{BClean\xspace}              % version 1 
    %{BClean\_PI\xspace}         % version 2
    %{BClean\_PI\_PD\xspace}    % version 3

%-------------------------------------------

% \newcommand{\constraint}{$\mathcal{R}$}

% Use this part to mark the position that needs revise. 
%\newcommand{\revise}[1]{{\color{red}{#1}}}
%\newcommand{\revise}[1]{#1}

%\newcommand{\revise}[1]{#1}

%\usepackage[scriptsize,tight]{subfigure}

%\newcommand{\pengd}[1]{\textcolor{cyan}{\textbf{Peng!!!} #1}}

%\newcommand{\pengd}[1]{}

\newcommand{\eat}[1]{}
\newcommand{\kw}[1]{{\ensuremath {\mathsf{#1}}}\xspace}

\newcommand{\ours}{\texttt{PrivBench}\xspace}

\AddToShipoutPictureBG*{%
  \AtPageUpperLeft{%
    \raisebox{-1.5cm}[0pt][0pt]{%
      \makebox[\paperwidth]{%
        \begin{minipage}{0.9\paperwidth}
          \centering
          \color{red}%
          \Large
          This paper has been published in PVLDB and will appear at VLDB 2025. Please cite the following reference:\\
          \color{blue}
          \normalsize
          Yunqing Ge, Jianbin Qin, Shuyuan Zheng, Yongrui Zhong, Bo Tang,
Yu-Xuan Qiu, Rui Mao, Ye Yuan, Makoto Onizuka, and Chuan Xiao.
Privacy-Enhanced Database Synthesis for Benchmark Publishing. PVLDB,
18(2): 413 - 425, 2024. doi:10.14778/3705829.3705855
        \end{minipage}
      }%
    }%
  }%
}

\begin{document}
\proofreadfalse
\extendedreportfalse

\title{Privacy-Enhanced Database Synthesis for Benchmark Publishing}

% \author{Paper ID: 229}

% \author{
%   Yongrui Zhong$^{1*}$, Yunqing Ge$^{1*}$, Jianbin Qin$^1$,
%   Shuyuan Zheng$^2$,\\
%   Bo Tang$^3$, Yuxuan Qiu$^4$, Rui Mao$^1$, Ye Yuan$^4$, Makoto Onizuka$^2$, Chuan Xiao$^{2,5}$
% }
% \affiliation{
%   \normalsize{SICS, Shenzhen University$^1$; Osaka Univeristy$^2$;} \\
%   \normalsize{Southern University of Science and Technology$^3$; Beijing Institute of Technology$^4$; Nagoya University$^5$} \\
%   \normalsize{\textit{\{zhongyongrui2021@email., geyunqing2022@email.,  qinjianbin@, mao@\}szu.edu.cn},} \\
%   \normalsize{\textit{tangb3@sustech.edu.cn},} 
%   \normalsize{\textit{yuan-ye@bit.edu.cn},}
%   \normalsize{\textit{ \{zheng, onizuka, chuanx\}@ist.osaka-u.ac.jp}} 
% }
% \maketitle
% \def\thefootnote{*}\footnotetext{Both authors contributed equally to this work}\def\thefootnote{\arabic{footnote}}
% text text text\footnote{normal footnote}

\author{Yunqing Ge}
% \authornotemark[1]
\affiliation{%
  \institution{\normalsize{Shenzhen University}}
}
\email{geyunqing2022@email.szu.edu.cn}

\author{Jianbin Qin}
\authornote{Corresponding authors.}
\affiliation{%
  \institution{\normalsize{SICS, Shenzhen University}}
}
\email{qinjianbin@szu.edu.cn}

\author{Shuyuan Zheng}
\authornotemark[1]
\affiliation{%
  \institution{\normalsize{Osaka Univeristy}}
}
\email{zheng@ist.osaka-u.ac.jp}

\author{Yongrui Zhong}
% \authornote{Both authors contributed equally to this research.}
\affiliation{%
  \institution{\normalsize{Shenzhen University}}
}
\email{zhongyongrui2021@email.szu.edu.cn}

\author{Bo Tang}
\affiliation{%
  \institution{\normalsize{Southern University of Science and Technology}}
}
\email{tangb3@sustech.edu.cn}

\author{Yu-Xuan Qiu}
\affiliation{%
  \institution{\normalsize{Beijing Institute of Technology}}
}
\email{qiuyx.cs@gmail.com}

\author{Rui Mao}
\affiliation{%
  \institution{\normalsize{SICS, Shenzhen University}}
}
\email{mao@szu.edu.cn}

\author{Ye Yuan}
\affiliation{%
  \institution{\normalsize{Beijing Institute of Technology}}
}
\email{yuan-ye@bit.edu.cn}

\author{Makoto Onizuka}
\affiliation{%
  \institution{\normalsize{Osaka Univeristy}}
}
\email{onizuka@ist.osaka-u.ac.jp}

\author{Chuan Xiao}
\affiliation{%
  \institution{\normalsize{Osaka Univeristy, Nagoya University}}
}
\email{chuanx@ist.osaka-u.ac.jp}

\begin{abstract}

    Benchmarking is crucial for evaluating a DBMS, yet existing benchmarks often fail to reflect the varied nature of user workloads. 
    As a result, there is increasing momentum toward creating databases that incorporate real-world user data to more accurately mirror business environments. 
    However, privacy concerns deter users from directly sharing their data, underscoring the importance of creating synthesized databases for benchmarking that also prioritize privacy protection. 
    Differential privacy (DP)-based data synthesis has become a key method for safeguarding privacy when sharing data, but the focus has largely been on minimizing errors in aggregate queries or downstream ML tasks, with less attention given to benchmarking factors like query runtime performance. 
    This paper delves into differentially private database synthesis specifically for benchmark publishing scenarios, aiming to produce a synthetic database whose benchmarking factors closely resemble those of the original data. 
    % This paper delves into the creation of privacy-preserving databases specifically for benchmark publishing scenarios, aiming to produce a differentially private database whose query runtime performance closely resembles that of the original data. 
    Introducing \textit{PrivBench}, an innovative synthesis framework based on sum-product networks (SPNs), we support the synthesis of high-quality benchmark databases that maintain fidelity in both data distribution and query runtime performance while preserving privacy. 
    We validate that PrivBench can ensure database-level DP even when generating multi-relation databases with complex reference relationships.
    Our extensive experiments show that PrivBench efficiently synthesizes data that maintains privacy and excels in both data distribution similarity and query runtime similarity.
    % Our extensive experiments show that PrivBench effectively and efficiently generates data that maintains privacy and excels in reducing errors in query runtime, query cardinality, and KL divergence.
    % Our extensive experiments show that PrivBench effectively and efficiently generates data that maintains privacy and excels in both data distribution similarity and query runtime similarity, consistently reducing errors in query runtime, query cardinality, and KL divergence.
\end{abstract}

\maketitle

%%% do not modify the following VLDB block %%
%%% VLDB block start %%%
\pagestyle{\vldbpagestyle}
\begingroup\small\noindent\raggedright\textbf{PVLDB Reference Format:}\\
\vldbauthors. \vldbtitle. PVLDB, \vldbvolume(\vldbissue): \vldbpages, \vldbyear.\\
\href{https://doi.org/\vldbdoi}{doi:\vldbdoi}
\endgroup
\begingroup
\renewcommand\thefootnote{}\footnote{\noindent
This work is licensed under the Creative Commons BY-NC-ND 4.0 International License. Visit \url{https://creativecommons.org/licenses/by-nc-nd/4.0/} to view a copy of this license. For any use beyond those covered by this license, obtain permission by emailing \href{mailto:info@vldb.org}{info@vldb.org}. Copyright is held by the owner/author(s). Publication rights licensed to the VLDB Endowment. \\
\raggedright Proceedings of the VLDB Endowment, Vol. \vldbvolume, No. \vldbissue\ %
ISSN 2150-8097. \\
\href{https://doi.org/\vldbdoi}{doi:\vldbdoi} \\
}\addtocounter{footnote}{-1}\endgroup
%%% VLDB block end %%%

%%% do not modify the following VLDB block %%
%%% VLDB block start %%%
\ifdefempty{\vldbavailabilityurl}{}{
\vspace{.3cm}
\begingroup\small\noindent\raggedright\textbf{PVLDB Artifact Availability:}\\
The source code, data, and/or other artifacts have been made available at \url{\vldbavailabilityurl}.
\endgroup
}
%%% VLDB block end %%%

\section{Introduction}
\label{sec:intro}
Benchmarking, a crucial component for evaluating the performance of DBMSs, has historically leveraged established benchmarks such as the TPC series~\cite{TPC}. These conventional benchmarks use fixed schemas and queries to compare the performance of different database systems standardizedly. However, they may fall short in representing the varied, specific workloads and data characteristics unique to every user. Moreover, the nuances of real-world applications, intrinsic data characteristics, and user-specific performance expectations might not be fully captured by these fixed benchmarks, showing a need for benchmarks more tailored to individual users.

For benchmark publishing, synthesizing a database is a highly challenging endeavor because it requires attention to four concerns: (1) fidelity of data distribution, (2) fidelity of query runtime performance, (3) privacy protection, and (4) synthesis efficiency. 
The first two concerns ensure accurate benchmarking, privacy protection accommodates compliance with data protection regulations, and synthesis efficiency facilitates rapid updates of enterprise-level benchmarks.
Some database synthesis efforts, such as SAM~\cite{yang2022sam}, can produce high-fidelity benchmarks but overlook privacy protection.
Such oversight can lead to inapplicability in contexts where data privacy is paramount, e.g., those involving user data.
On the other hand, existing privacy-preserving data synthesis methods (e.g., ~\cite{ping2017datasynthesizer, zhang2017privbayes, zhang2021privsyn, mckenna2021winning, cai2021privmrf, mckenna2022aim, pujol2022prefair, liu2023tabular, cai2023privlava}) only focus on enhancing the fidelity in data distribution while overlooking the fidelity in query runtime performance, thereby far from meeting the practical demands of benchmarking scenarios.

In this paper, we propose PrivBench, a database synthesis framework for benchmark publishing that addresses all four of the aforementioned concerns.
The design of PrivBench is based on our novel observation: 
\textit{sum-product networks}~\cite{poon2011sum}, when used as data synthesis models, can simultaneously ensure the fidelity of data distribution and query runtime while offering linear-time efficiency for data sampling.
However, SPNs do not provide privacy protection. 
Therefore, PrivBench combines SPNs with differential privacy (DP)~\cite{dwork2006differential}, a widely-adopted privacy technique, to construct privacy-preserving SPNs.
Using DP-based SPNs, PrivBench efficiently synthesizes high-fidelity databases, protecting user privacy and reducing the compromise on quality in benchmarking scenarios. 
In particular, PrivBench constructs SPNs on the input database instance, which is equivalent to partitioning the records of each relational table into disjoint blocks and building histograms within each block. Additionally, we connect the per-table SPNs based on reference relationships to form a multi-SPN model. We inject noise into all partitioning operations, histograms, and SPN connections to guarantee DP for data synthesis through the multi-SPN.

\begin{figure*}
        \vspace{-0.3cm}
    \centering
    \includegraphics[width=0.9\linewidth]{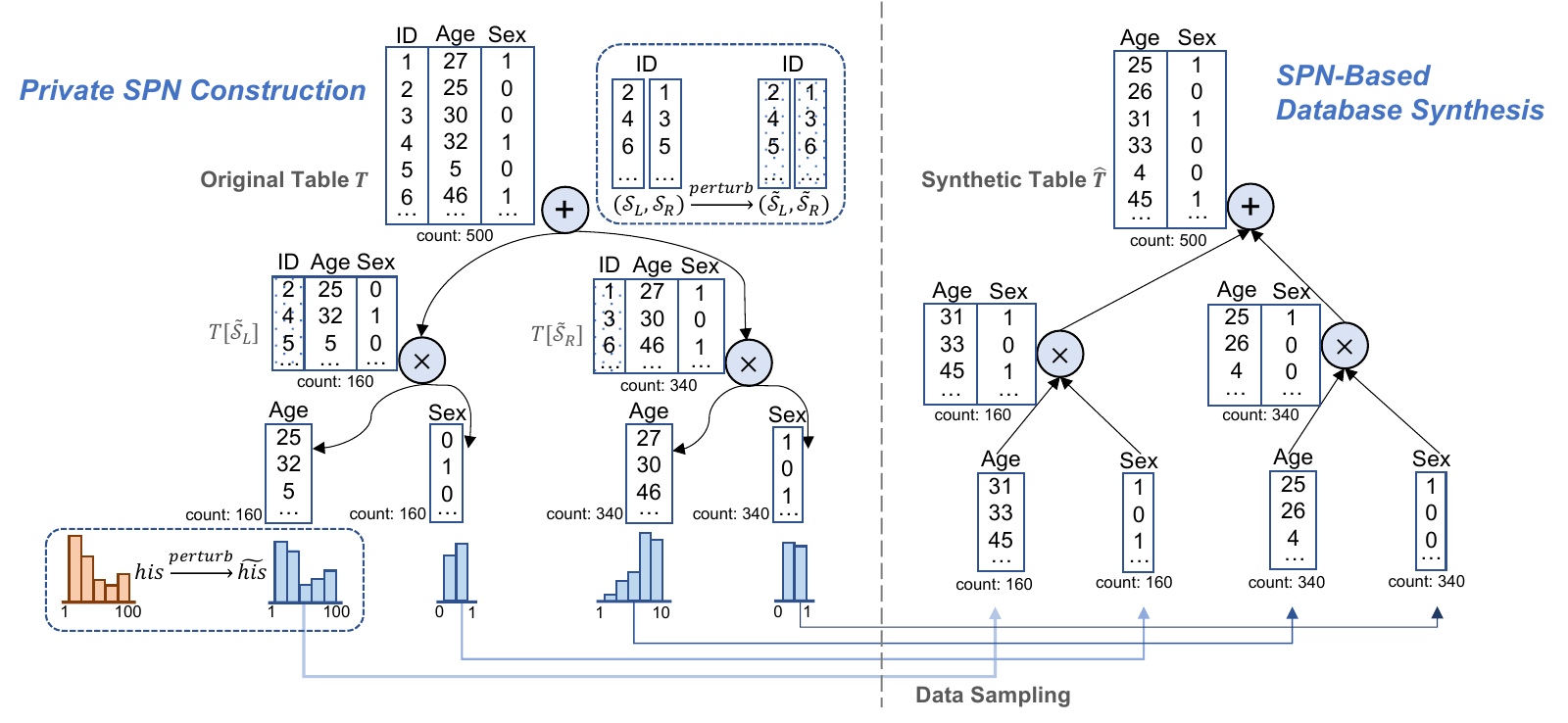} 
            \vspace{-0.3cm}
    \caption{Learning a single-relation SPN and generating data based on the SPN.}
    \label{fig:BuSPNGenData}
                \vspace{-0.2cm}
\end{figure*}

\begin{example}
    Figure \ref{fig:BuSPNGenData} illustrates the process of constructing an SPN for a single table and synthesizing data from it. 
    Initially, the original table comprises two attributes and 500 rows. 
    The data is partitioned into two groups of rows through clustering, which introduces a \textit{sum node}. 
    Subsequently, the columns within each group are separated by a \textit{product node}, with each column forming a leaf node. 
    Each leaf node stores a histogram representing its corresponding single-column subtable. 
    To ensure DP, the table partition decision $(\mathcal{S}_L, \mathcal{S}_R)$ for the sum node, which consists of two sets of row indices or IDs, and the histogram $H$ for each leaf node are perturbed.
    During the database synthesis phase, synthetic data are sampled from these perturbed histograms and progressively merged up the hierarchy. 
    A product node performs a horizontal concatenation of columns, while a sum node executes a vertical concatenation of rows. 
    This recursive procedure is repeated until it reaches the root node, resulting in the generation of a complete synthetic database.
\end{example}

\begin{table}[t]
\footnotesize
% \small
\centering
\caption{Comparison with existing differentially private data synthesis methods.}
% \vspace{-0.3cm}
\setlength{\tabcolsep}{4pt}
\centering
\begin{tabular}{l|l|cc|c}
\hline
\multicolumn{1}{c|}{\multirow{2}{*}{\textbf{Method}}} & \multicolumn{1}{c|}{\multirow{2}{*}{\textbf{\begin{tabular}[c]{@{}c@{}}Multi-\\ relation\end{tabular}}}} & \multicolumn{2}{c|}{\textbf{Similarity}} & \multirow{2}{*}{\textbf{Efficiency}} \\ \cline{3-4}
\multicolumn{1}{c|}{} & \multicolumn{1}{c|}{} & \textbf{Data distr.} & \textbf{Query runtime} &  \\ \hline
AIM~\cite{mckenna2022aim} &  & low & low & low \\
PrivSyn~\cite{zhang2021privsyn} &  & low & med. & med. \\
DataSynthesizer~\cite{ping2017datasynthesizer} &  & low & med. & high \\
PrivBayes~\cite{zhang2017privbayes} &  & low & med. & high \\
Exponential-PreFair~\cite{pujol2022prefair} &  & med. & low & low \\
MST~\cite{mckenna2021winning} &  & med. & low & med. \\
DPGAN~\cite{liu2023tabular} &  & med. & med. & low \\
PrivMRF~\cite{cai2021privmrf} &  & med. & high & low \\
Greedy-PreFair~\cite{pujol2022prefair} &  & med. & high & med. \\
PrivLava~\cite{cai2023privlava} & \multicolumn{1}{c|}{$\checkmark$} & high & med. & low \\
\textbf{PrivBench} & \multicolumn{1}{c|}{\textbf{$\checkmark$}} & \textbf{high} & \textbf{high} & \textbf{high} \\ \hline
\end{tabular}
\label{tab:comparison-alternative}
\end{table}

In essence, our contribution is a novel exploration into the balance between synthesizing enterprise-level databases that maintain a high fidelity to original user data for DBMS benchmarking, while preserving user privacy. 
Through the strategic incorporation of SPNs and DP, PrivBench aligns closely with benchmarking scenarios, protecting user privacy and reducing the compromise on the quality or authenticity of the synthesized database. Table~\ref{tab:comparison-alternative} provides a summary of comparisons with alternative methods.

Our experimental results on several publicly accessible datasets demonstrate that, compared to PrivLava~\cite{cai2023privlava}, the current state-of-the-art (SOTA) in privacy-preserving data synthesis, PrivBench consistently achieves superior performance in both data distribution similarity and query runtime similarity, while significantly enhancing synthesis efficiency.
Notably, PrivBench reduces KL divergence (KLD) by up to $62.4\%$ and Q-error by up to $97.8\%$ relative to PrivLava, which respectively measure the differences in data distribution and query runtime performance between the original and synthetic data.
Our contributions are summarized as follows.

\begin{itemize}[leftmargin=*]
    \item We study the problem of privacy-preserving database synthesis, which transforms a database in a differentially private manner while aligning the performance of query workloads on the synthetic database with the original one. 
    \item To solve the problem, we propose PrivBench, which synthesizes databases by leveraging differentially private SPNs.
    These SPNs not only address the four key concerns in benchmark publishing scenarios but also support multi-relation database synthesis by constructing differentially private fanout tables for primary-foreign key references, adeptly managing complex dependencies among the relations.
    To the best of our knowledge, our work is the first to facilitate differentially private SPNs and the first to apply them to database synthesis.
    \item We conduct a rigorous analysis of privacy and time complexity for PrivBench.
    Our analysis shows how PrivBench ensures that its data synthesis, through the complex multi-SPN construction process, achieves database-level DP~\cite{cai2023privlava}, and that the synthesis time is polynomial in the number of records and the number of attributes in each table.
    \item Our extensive experimental results illustrate that PrivBench consistently achieves better fidelity in both data distribution and query runtime performance while being significantly more efficient than alternative methods.
\end{itemize}

\section{Problem Definition}
\label{sec:problem}

\begin{figure}
        \footnotesize
        \centering
        \subfloat[Primary private table.]{
            \begin{tabular}{ccc}
            \hline
            H-ID            & Rooms                & \dots \\ \hline
            1                    & 2                    & \dots \\
            2                    & 5                    & \dots \\
            3                    & 3                    & \dots \\
            \dots & \dots & \dots \\ \hline
            Range                & {[}1, 10{]}          & \dots \\ \hline
            \end{tabular}
        \label{tab:example_tab_T_1}
        }\subfloat[Secondary private table.]{
                    \begin{tabular}{ccccc}
                    \hline
                    ID           & Sex                  & Age                  & \dots & H-ID            \\ \hline
                    1                    & 1                    & 27                   & \dots & 1                    \\
                    2                    & 0                    & 25                   & \dots & 1                    \\
                    3                    & 0                    & 30                   & \dots & 2                    \\
                    4                    & 1                    & 32                   & \dots & 2                    \\
                    5                    & 0                    & 5                    & \dots & 2                    \\
                    6                    & 1                    & 46                   & \dots & 3                    \\
                    \dots & \dots & \dots & \dots & \dots \\ \hline
                    Range                & {[}0, 1{]}           & {[}1, 100{]}         & \dots &                      \\ \hline
                    \end{tabular}
        \label{tab:example_tab_T_2}   
        }
    \caption{A database consisting of two private tables.}
    \label{tab:example_tab}
\end{figure}

\changemarker{
We consider a scenario where a benchmark publisher, such as an enterprise or organization, synthesizes a database benchmark using private data (e.g., user data) and releases it to other parties, referred to as benchmark consumers, for testing purposes. The benchmark publisher is a trusted party who is granted access to the original database by the data entities (e.g., users).
In contrast, the benchmark consumers may be malicious and may try to infer private information from the released benchmark.
Therefore, data anonymization techniques, such as differential privacy~\cite{dwork2014algorithmic}, are required to safeguard data privacy and ensure compliance with data protection regulations like the GDPR.
}

Consider a benchmark that consists of a database instance $D = \set{T_1, T_2,\dots, T_n}$ with its associated schema, each $T_i \in D$ being a private relation table. 
\changemarker{
For any $T_i, T_j \in D$, a record $r_i \in T_i$ refers to a record $r_j \in T_j$, denoted as $r_i \to r_j$, if the foreign key of $r_i$ refers to the primary key of $r_j$ and the values of the two keys match.
A record $r_i$ depends on $r_j$, denoted as $r_i \rightsquigarrow r_j$, if either $r_i \to r_j$ or $r_i$ refers to another record $r_k \in T_k$ ($k\neq i, j$) that depends on $r_j$. 
We assume $r \rightsquigarrow r$ never occurs for any record $r$. 
}

Without loss of generality, following previous work \cite{cai2023privlava}, we assume that $T_1$ is a \textit{primary} private table (e.g., the table in Figure \ref{tab:example_tab_T_1}) that contains sensitive information, and any other table $T_i, i\in[2, n]$, is a \textit{secondary} private table (e.g., the table in Figure \ref{tab:example_tab_T_2}) in which the records depend on the records in $T_1$.
% For any other tables $T_i$, if the records in $T_i$ depend on the records in $T_1$, $T_i$ is a \textit{secondary} private table;
% otherwise, it is a public table without any sensitive information. 
% We assume that $T_2,\dots, T_n$ are secondary private tables.
We study the case of synthesizing a database $\widehat{D}$ similar to $D$ while preserving the privacy of all private tables. 
We employ DP~\cite{dwork2014algorithmic}, a celebrated approach for protecting database privacy. 
\begin{definition}[Table-Level DP~\cite{dwork2014algorithmic}]
\label{def:table-dp}
    A randomized algorithm $\mathcal{M}$ that takes as input a table $T$ satisfies table-level $\epsilon$-DP ($\epsilon \geq 0$) if for any possible output $o$, and for any pair of neighboring tables $T, T'$ that differ in the value of only one record, we have $$\Pr{\mathcal{M}(T) = o} \leq \exp(\epsilon) \cdot \Pr{\mathcal{M}(T') = o}.$$
    % \begin{align}
    %     & \Pr{\mathcal{M}(T) = o} \leq \exp(\epsilon) \cdot \Pr{\mathcal{M}(T') = o}.
    % \end{align}}
\end{definition}
\begin{definition}[Database-Level DP \cite{cai2023privlava}]
\label{def:db-dp}
    \changemarker{A randomized algorithm $\mathcal{M}$ that takes as input a database $D$ satisfies database-level $\epsilon$-DP ($\epsilon \geq 0$) if for any possible output $o$ and for any pair of neighboring databases $D, D'$, we have $$\Pr{\mathcal{M}(D) = o} \leq \exp(\epsilon) \cdot \Pr{\mathcal{M}(D') = o}.$$
    % \begin{align}
    %     &\Pr{\mathcal{M}(D) = o} \leq \exp(\epsilon) \cdot \Pr{\mathcal{M}(D') = o}.
    % \end{align}    
    $D$ and $D'$ neighbor if for the primary table $T_1$, they differ in the value of only one record $r_1 \in T_1$, and for the other tables $T_i, i\in[2, n]$, they differ in the values of all records $r_i \rightsquigarrow r_1$ that depend on $r_1$.}
\end{definition}

\textit{Table-level} DP provides a strong guarantee that the output distributions remain indistinguishable before and after any change to a single record $r\in T$.
This guarantee should hold even when accounting for dependencies among records induced by foreign keys in the relational schema.
For example, in Figure \ref{tab:example_tab}, if we aim to protect a household in the primary table with multiple household members in the secondary table, we need to ensure that any holistic changes to the information of each household and its members do not significantly affect the output.
Therefore, we adopt \textit{database-level} DP \cite{cai2023privlava}, which further ensures that any changes in other records resulting from modifying a record $r_1 \in T_1$ also have a limited impact on the output.
To ensure a bounded multiplicity for neighboring databases, we follow prior work \cite{kotsogiannis2019privatesql, tao2020computing, dong2022r2t, cai2023privlava} to assume that each record in $T_1$ is referred to by at most $\tau_i$ records for each secondary table $T_i$.
The privacy budget $\epsilon$ controls the level of indistinguishability.
The smaller the privacy budget, the greater the level of privacy protection.
With the definition of DP, we target the following privacy-preserving database synthesis problem.
\begin{problem}[Privacy-Preserving Database Synthesis]
  Given a database $D$ with its schema, the task is to generate a database $\widehat{D}$ with the same schema as $D$ while ensuring database-level $\epsilon$-DP. 
\end{problem}

% \subsubsection{Sum product network}

% \subsection{Privacy-Preserving Database Synthesis}

Whereas many differentially private data synthesizers are available~\cite{ping2017datasynthesizer, zhang2017privbayes, zhang2021privsyn, mckenna2021winning, cai2021privmrf, mckenna2022aim, pujol2022prefair, liu2023tabular, cai2023privlava}, their aims are mainly reducing the errors of aggregate queries or optimizing downstream ML tasks, rather than publishing a database for benchmarking. 
\changemarker{
Considering that the benchmark is used to test the performance of a DBMS, we aim to synthesize a database $\widehat{D}$ with the following concerns.
}
% \begin{itemize}[leftmargin=*]
% \begin{enumerate}
%   \item $\widehat{D}$ is supposed to be statistically similar to $D$.
%   \item For the query workloads to be executed on $D$, we expect they report the same runtime performance on $\widehat{D}$. 
%   \item $\widehat{D}$ should be computed efficiently.
% \end{enumerate}

\changemarker{
\myparagraph{1. Data distribution similarity}
$\widehat{D}$ should be statistically similar to $D$.
Given a table $T \in D$ and its counterpart $\widehat{T} \in \widehat{D}$, we measure their statistical similarity by the KL divergence (KLD) of $T$ from $\widehat{T}$: 
\begin{align*}
    \kw{KLD}(T \parallel \widehat{T}) = \sum\nolimits_{x \in \mathcal{X}(T\cup \widehat{T})} \Pr{x | T} \log \left( \frac{\Pr{x | T}}{\Pr{x | \widehat{T}}} \right)
\end{align*}
where $\Pr{x | T}$ denotes the probability of row value $x$ in $T$, and $\mathcal{X}(T)$ denotes the set of row values in $T$. Then, the KLD of $D$ from $\widehat{D}$ is defined as the average KLD over all the tables of $D$. 
}

\changemarker{
\myparagraph{2. Query runtime similarity}
For the query workloads to be executed on $D$, we expect they report the same runtime performance on $\widehat{D}$. 
Given a query, we can compare the execution times on $D$ and $\widehat{D}$. Since the execution time may vary across DBMSs, we can also compare the cardinality, i.e., the number of records in the query result, which is often used in a query optimizer for estimating query performance~\cite{leis2015good}. In particular, Q-error~\cite{moerkotte2009preventing} is a widely used measure for comparing cardinalities: 
\begin{align*}
    \kw{QE}(q, D, \widehat{D}) = \max \left( \frac{Card(q, D)}{Card(q, \widehat{D})}, \frac{Card(q, \widehat{D})}{Card(q, D)} \right) 
\end{align*}
where $q$ denotes a query in the workload, and $Card(q, D)$ denotes the cardinality of executing $q$ on $D$. Then, we compute the mean Q-error of the queries in the workload. 
}

\changemarker{
\myparagraph{3. Synthesis efficiency}
$\widehat{D}$ should be synthesized efficiently.
In practice, databases often contain a large number of records and attributes. 
As these databases are frequently updated with daily activities, the synthetic benchmark also needs to be updated accordingly, resulting in significant time costs.
Therefore, we should ensure that the synthesis algorithm completes in polynomial time w.r.t. both the number of records and the number of attributes.
}

\section{PrivBench} 
\label{sec:method}
In this section, we propose PrivBench, an SPN-based differentially private database synthesis method.
% \changemarker{We provide a table of frequently used notations in our online appendix~\cite{appendix}.}

\subsection{Sum-Product Network}

% \subsubsection{Sum-product network}
\changemarker{
An SPN \cite{poon2011sum} is a rooted acyclic-directed graph that represents the data distribution of a dataset.
PrivBench utilizes SPNs to address the three concerns outlined in Section~\ref{sec:problem} due to the following merits.
\begin{enumerate}[leftmargin=*]
    \item SPNs have a strong ability to capture data distributions.
    % Previous work on DP-based data synthesis has employed various graphical models, such as Markov Random Fields~\cite{cai2021privmrf, cai2023privlava}, Bayesian Networks~\cite{ping2017datasynthesizer}, and causal graphs~\cite{pujol2022prefair}, for learning data distributions, it has been demonstrated that graphical model-based methods outperform others in learning data distributions.
    Current SOTAs~\cite{cai2021privmrf, cai2023privlava} for privacy-preserving data synthesis have employed graphical models (GMs) other than SPNs to learn data distributions.
    However, it has been proven that all tractable GMs can be transformed into equivalent SPNs and that SPNs are even strictly more general, implying that SPNs can learn more accurate data distributions than other GMs \cite{poon2011sum}.
    \item SPNs have demonstrated excellent performance in cardinality estimation~\cite{DeepDB, han2021cardinality}.
    Our insight is that, since computing the cardinalities of queries on synthetic data can be approximately viewed as estimating the cardinalities on the original data, we believe that SPNs should perform well in optimizing the Q-error metric, resulting in better query runtime similarity. 
    \item Inference in SPNs finishes in time linear to the number of nodes \cite{poon2011sum}, making sampling data from SPNs notably efficient.
\end{enumerate}
In a nutshell, the use of SPNs for database synthesis aligns with the major concerns of benchmark publishing except that SPNs do not provide privacy protection.
}

\changemarker{
Following prior work on cardinality estimation~\cite{DeepDB}, we employ binary tree-structured SPNs with sum and product nodes as internal nodes and leaves.
Formally, given a table $T$ with attributes $X_1,\dots,X_M$, a sum (resp. product) node $v_\kw{sum}$ (resp. $v_\kw{prod}$) splits the rows (resp. columns) into two subtables $T[\mathcal{S}_L], T[\mathcal{S}_R]$, where $\mathcal{S}_L, \mathcal{S}_R$ are two subsets of row (resp. column) indices.
A leaf node $v_\kw{leaf}$ represents the marginal distribution of an attribute $X_m, m\in[M]$ w.r.t. a subset $\mathcal{S}$ of rows using a histogram, i.e., $v_\kw{leaf} =  \Pr{X_m | \mathcal{S}}$.
Then, the value of a sum node is the weighted sum of its children $v_L, v_R$, i.e., $v_\kw{sum} =  w_L \cdot v_L + w_R \cdot v_R$, where $w_L = \frac{|\mathcal{S}_L|}{|\mathcal{S}_L| + |\mathcal{S}_R|}, w_R = \frac{|\mathcal{S}_R|}{|\mathcal{S}_L| + |\mathcal{S}_R|}$ are weights proportional to the sizes of the split clusters,
while the value of a product node is the product of its children, i.e.,  $v_\kw{prod} =  v_L \cdot v_R$.
Consequently, the value of the root node approximates the joint distribution $\Pr{X_1,...,X_M}$ for a given table by summing and multiplying the marginal distributions represented by the leaves from bottom to top.
}

\SetKwInOut{Input}{Input}
\SetKwInOut{Output}{Output}

\begin{algorithm}[t]
\small
    \caption{PrivBench}
    \label{alg:privbench}
    \Input{Database $D$}
    \Output{Synthetic database $\widehat{D}$}
    \KwParam{Privacy budgets $\epsilon^s_i$ (for SPN construction) and $\epsilon^f_i$ (for fanout construction), $\forall i \in [n]$}
    % \Input{Database $D$, privacy budget $\epsilon$}
    % \Output{Synthetic database $\widehat{D}$}
    % \KwParam{Maximum multiplicities $\tau_1,...,\tau_n$.}
    % $\mathcal{O} \leftarrow $ call Algorithm 4 to get generation order\;
    % \For{each private table $T_i$ of $D$}{
    %     $\epsilon_i \leftarrow \frac{\epsilon}{\tau_i \cdot n}$ \;
    % }
    % SPNList $\leftarrow []$\;
    \For{each private table $T_i$ of $D$ }{
        $t_i \leftarrow \kw{PrivSPN}(T_i, \epsilon^s_i)$\;
        % SPNList.add($SPN_i$)\;
    }
    % $\epsilon \leftarrow $ assign budget for fanout list.\;
    \For{each pair of tables $(T_i, T_j)$ where $T_i$ refers to $T_j$}{
        % $SPN_1, SPN_0 \leftarrow$ SPNs of ($R_1, R_0$)\;
        $t_i' \leftarrow \kw{PrivFanout}(T_i, t_i, FK_{i,j}, \epsilon^f_i)$\;
    }
    % \For{each SPN $t_i$}{
    %     \small{$\widehat{T}_i \leftarrow \kw{SampleDataFromSPN}(\{t_i\}_{\forall i}, \{\mathcal{F}_{i,j}\}_{\forall i, j})$\;}
    % }
    \For{each modified SPN $t_i'$}{
        \small{$\widehat{T}_i \leftarrow \kw{SampleDataFromSPN}(t_i')$\;}
    }
    \Return $\widehat{D}=\set{\widehat{T}_1,\dots,\widehat{T}_n}$\;
\end{algorithm}

\subsection{Overview of PrivBench}
As shown in Algorithm \ref{alg:privbench}, PrivBench involves a three-phase process for database synthesis.
\begin{itemize}[leftmargin=*]
% \begin{itemize}
    \item \textit{Private SPN Construction (Lines 1--2).} 
    Firstly, we construct an SPN $t_i$ for each table $T_i$ in the input database $D$. 
    Each SPN $t_i$ is created by a differentially private algorithm $\kw{PrivSPN}$, where each leaf represents the marginal of an attribute for some rows.
    \item \textit{Private Fanout Construction (Lines 3--4).}
    Secondly, we complement each SPN $t_i$ with some leaf nodes that model primary-foreign key references to obtain a modified SPN $t_i'$.
    We use a differentially private algorithm $\kw{PrivFanout}$ to calculate \textit{fanout frequencies} for each foreign key and store them in the complemented leaf nodes.  
    \item \textit{SPN-Based Database Synthesis (Lines 5--6)} 
    Thirdly, we sample a table $\widehat{T}_i$ from each modified SPN $t_i'$ to synthesize a database $\widehat{D}$.    
    % We join the leaf nodes of multiple SPNs according to their primary-foreign key references using a fanout table.
    % Then, we sample data from the SPN leaf nodes, conduct post-processing, and stitch the samples along SPN paths to fabricate the synthetic database instance $\widehat{D}$.
    % Since all the SPNs are constructed in a differentially private manner, this phase does not need to inject privacy preservation. 
\end{itemize}

\subsection{Private SPN Construction}

\begin{algorithm} [t]
\small
  \caption{Private SPN Construction $\kw{PrivSPN}(T, \epsilon)$}
  \label{alg:SPN}
  \SetKwComment{tcp}{//}{}
  \Input{table $T$, total privacy budget $\epsilon$. } 
  \Output{A tree of SPN $t=(\kw{parent}, \kw{left}, \kw{right})$}
  \SetKwProg{Procedure}{procedure}{}{}
  $\kw{op}, \epsilon_\kw{op}, \bar{\epsilon} \leftarrow \kw{Planning}(T, \epsilon)$\;
    $\kw{parent}, (\widetilde{\mathcal{S}}_L, \widetilde{\mathcal{S}}_R) \leftarrow \kw{ParentGen}(T, \kw{op}, \epsilon_\kw{op})$\;
    $\kw{left}, \kw{right} \leftarrow \kw{ChildrenGen}(T, \kw{op}, \widetilde{\mathcal{S}}_L, \widetilde{\mathcal{S}}_R, \bar{\epsilon})$\;
    \Return{$(\kw{parent}, \kw{left}, \kw{right})$}\;
    \Procedure{$\kw{ParentGen}(T, \kw{op}, \epsilon_\kw{op})$}{
        \If{$\kw{op} = \kw{OP.LEAF}$}{
            \small{$\widetilde{\mathcal{S}}_L, \widetilde{\mathcal{S}}_R \leftarrow \emptyset, \emptyset$\;
            $\widetilde{his} \leftarrow \kw{his}(T) + \kw{Lap}(\frac{\Delta(\kw{his})}{\epsilon_\kw{op}})$; $\kw{parent} \leftarrow \kw{LeafNode}(\widetilde{his})$\;}
            
        }    
        \ElseIf{$\kw{op} = \kw{OP.SUM}$}{
            \footnotesize{$\widetilde{\mathcal{S}}_L, \widetilde{\mathcal{S}}_R \leftarrow \kw{RowSplit}(T, \epsilon_\kw{op})$; $\kw{parent} \leftarrow \kw{SumNode}(\widetilde{\mathcal{S}}_L, \widetilde{\mathcal{S}}_R)$\;}
            % $\epsilon_L \leftarrow \bar{\epsilon}, \epsilon_R \leftarrow \bar{\epsilon}$
        } 
        \ElseIf{$\kw{op} = \kw{OP.PRODUCT}$}{
            \footnotesize{$\widetilde{\mathcal{S}}_L, \widetilde{\mathcal{S}}_R \leftarrow \kw{ColSplit}(T, \epsilon_\kw{op})$; $\kw{parent} \leftarrow \kw{ProdNode}(\widetilde{\mathcal{S}}_L, \widetilde{\mathcal{S}}_R)$\;}
            % $\epsilon_L \leftarrow \bar{\epsilon} \cdot \frac{\sigma(T[\widetilde{\mathcal{S}}_L])}{\sigma(T[\widetilde{\mathcal{S}}_L]) + \sigma(T[\widetilde{\mathcal{S}}_R])}, \epsilon_R \leftarrow \bar{\epsilon} - \epsilon_L$
        } 
        \Return{$\kw{parent}, (\widetilde{\mathcal{S}}_L, \widetilde{\mathcal{S}}_R)$}\;
    }
    \Procedure{$\kw{ChildrenGen}(T, \kw{op}, \widetilde{\mathcal{S}}_L, \widetilde{\mathcal{S}}_R, \bar{\epsilon})$}{
        \If{$\kw{op} = \kw{OP.SUM}$}{
            $\epsilon_L \leftarrow \bar{\epsilon}, \epsilon_R \leftarrow \bar{\epsilon}$
        } 
        \ElseIf{$\kw{op} = \kw{OP.PRODUCT}$}{
            $\epsilon_L \leftarrow \bar{\epsilon} \cdot \frac{\sigma(T[\widetilde{\mathcal{S}}_L])}{\sigma(T[\widetilde{\mathcal{S}}_L]) + \sigma(T[\widetilde{\mathcal{S}}_R])}, \epsilon_R \leftarrow \bar{\epsilon} - \epsilon_L$
        }
        \If{$\kw{op} = \kw{OP.LEAF}$}{
            $\kw{left} \leftarrow \kw{null}$; $\kw{right} \leftarrow \kw{null}$\;
        }
        \Else{
            \footnotesize{$\kw{left} \leftarrow \kw{PrivSPN}(T[\widetilde{\mathcal{S}}_L], \epsilon_{L})$; $\kw{right} \leftarrow \kw{PrivSPN}(T[\widetilde{\mathcal{S}}_R], \epsilon_R)$\;}
        }
        \Return{$\kw{left}, \kw{right}$}
    }
\end{algorithm}

\subsubsection{Overview of \kw{PrivSPN}}
Algorithm \ref{alg:SPN} outlines the process for constructing an SPN on a single table, \changemarker{which calls Algorithm \ref{alg:plan} to optimize the SPN's structure}.
The \kw{PrivSPN} algorithm recursively generates a binary tree $t=(\kw{parent}, \kw{left}, \kw{right})$, where \kw{parent} is a node, and \kw{left}, \kw{right} are the two child subtrees.
We divide \kw{PrivSPN} into the following three procedures.
\begin{itemize}[leftmargin=*]
% \begin{itemize}
    \item \textit{\kw{Planning} (Operation Planning):} 
    We decide an operation \kw{op} for the parent node and the privacy budget $\epsilon_\kw{op}$ allocated to \kw{op}.
    \item \textit{\kw{ParentGen} (Parent Generation):} 
    According to the operation \kw{op} and the privacy budget $\epsilon_\kw{op}$, we generate the parent node.
    \item \textit{\kw{ChildrenGen} (Children Generation):} We further generate the children \kw{left}, \kw{right} through recursive calls to \kw{PrivSPN}.
\end{itemize}
\changemarker{
For ease of presentation, we first introduce how the parent and children are generated, and then we delve into the \kw{Planning} procedure.}

\begin{algorithm}[t]
\small
\caption{Row Splitting \kw{RowSplit}}
    \label{alg:rowsplit}
    \Input{table $T$, privacy budget $\epsilon$}
    \Output{row partition $(\widetilde{\mathcal{S}}_L, \widetilde{\mathcal{S}}_R)$}
    \KwParam{number of iterations $J$, minimum table size $\beta$}
    \changemarker{
    $(\mathcal{S}_{L, 0},  \mathcal{S}_{R, 0})\leftarrow$ Sample a row partition such that $|\mathcal{S}_{L, 0}| = |T| / 2$\;
    \For{$j$ from $1$ to $J$}{
        $c_L \leftarrow \sum_{r\in T[\mathcal{S}_{L, j-1}]} r / |\mathcal{S}_{L, j-1}|$  \quad// Center of left cluster\;
        $c_R \leftarrow \sum_{r\in T[\mathcal{S}_{R, j-1}]} r / |\mathcal{S}_{R, j-1}|$ \quad// Center of right cluster\;  
        % $\widetilde{\kw{diff}}_i \leftarrow ||T[i] - c_L||_1 - ||T[i] - c_R||_1 + \kw{Lap}(\frac{2L\cdot|attr(T)|\cdot J}{\epsilon}), \forall i \in [|T|]$\;
        $\kw{diff}_i \leftarrow \kw{dist}(T[i] - c_L) - \kw{dist}(T[i] - c_R), \forall i \in [|T|]$\;
        $\widetilde{\kw{diff}}_i \leftarrow \kw{diff}_i + \kw{Lap}(\frac{\Delta(\kw{diff}_i)\cdot J}{\epsilon}), \forall i \in [|T|]$\;
        $\mathcal{S}_{L, j} \leftarrow \arg_{i} (\widetilde{\kw{diff}}_i \leq 0)$;
        $\mathcal{S}_{R, j} \leftarrow \arg_{i} (\widetilde{\kw{diff}}_i > 0)$\;
    }
    Adjust clusters $\mathcal{S}_{L, J}, \mathcal{S}_{R, J}$ to ensure their sizes are not less than $\beta$\;
    \Return $(\widetilde{\mathcal{S}}_L, \widetilde{\mathcal{S}}_R)= (\mathcal{S}_{L, J}, \mathcal{S}_{R, J})$\;
    }
\end{algorithm}

\begin{algorithm}[ht]
\small
\caption{Column Splitting \kw{ColSplit}}
    \label{alg:colsplit}
    \Input{table $T$, privacy budget $\epsilon$}
    \Output{column partition $(\widetilde{\mathcal{S}}_L, \widetilde{\mathcal{S}}_R)$}
    % \KwParam{minimum table size $\beta$}
    \changemarker{
    % If $l = 0$, set $m = |T| / 2$; otherwise, set $m=|attr(T)| / 2$\;
    \For{$j$ from $1$ to $|attr(T)|$}{
        % $m \leftarrow \kw{round}(\mathcal{N}(K/2, 1))$\;
        % If $l=0$, $m \leftarrow \max(\beta, \min(K - \beta, m))$; otherwise, $m \leftarrow \max(1, \min(K - 1, m))$\;
        \small{$o_i = (\mathcal{S}_{L, j},  \mathcal{S}_{R, j})\leftarrow$ Sample a column partition such that $|\mathcal{S}_{L, j}| = |attr(T)| / 2$\;}
        $\rho_j \leftarrow \kw{NMI}(T, (\mathcal{S}_{L, j},\mathcal{S}_{R, j}))$ 
    }
    $(\widetilde{\mathcal{S}}_L, \widetilde{\mathcal{S}}_R) \leftarrow$ Sample a partition from the set of partitions $O = \{o_j\}_{\forall j}$ with probability defined in Equation (\ref{eq:exponential}) \;
    \Return $(\widetilde{\mathcal{S}}_L, \widetilde{\mathcal{S}}_R)$\;
    }
\end{algorithm}

\subsubsection{Parent generation}
We generate a parent node based on the operation $\kw{op}$ and perturb it with the privacy budget $\epsilon_\kw{op}$, as follows.
% A sum (resp. product) node indicates partitioning the rows (resp. columns) of the given table, while a leaf node means generating the table's histogram.

\myparagraph{Case 1: $\kw{op} = \kw{OP.LEAF}$  (Line 6)}
We generate a leaf node $\kw{LeafNode}(\widetilde{his})$ with a perturbed histogram $\widetilde{his}$.
Specifically, the $\kw{ParentGen}$ procedure first computes a histogram $\kw{his}(T)$ over the table $T$ and then perturbs the histogram by adding some Laplace noise $\kw{Lap}(\frac{\Delta(\kw{his})}{\epsilon_\kw{op}})$, where $\Delta$ returns the global sensitivity of a given function $f$, i.e., $\Delta(f) = \max_{T, T'} |f(T) - f(T')|$.

\myparagraph{Case 2: $\kw{op} = \kw{OP.SUM}$ (Line 9)} 
We call the \kw{RowSplit} mechanism (see Algorithm \ref{alg:rowsplit}) to compute a differentially private row partition $(\widetilde{\mathcal{S}}_L, \widetilde{\mathcal{S}}_R)$, where $\widetilde{\mathcal{S}}_L, \widetilde{\mathcal{S}}_R$ are two subsets of row indices.
Then, the row partition yields the sum node $\kw{SumNode}(\widetilde{\mathcal{S}}_L, \widetilde{\mathcal{S}}_R)$.

For row splitting, our strategy is to group similar rows into the same cluster. 
By generating distinct segments, SPNs can more effectively learn and represent the local distributions of different data subsets. 
To achieve this, we adopt the DP-based $K$-Means algorithm, DPLloyd~\cite{su2016differentially}, which groups data points into $K$ clusters by comparing their distances to $K$ cluster centers. 
However, the output of DPLloyd is the cluster centers rather than the data points within the clusters, which does not work in our task.
Therefore, we adapt DPLloyd to the \kw{RowSplit} mechanism to support yielding a row partition $(\widetilde{\mathcal{S}}_L, \widetilde{\mathcal{S}}_R)$ as output.

Concretely, as shown in Algorithm \ref{alg:rowsplit}, we first uniformly sample a row partition $(\mathcal{S}_{L, 0},  \mathcal{S}_{R, 0})$ where the clusters are half-sized (Line 1). 
Then, we update the partition through $J$ iterations. 
In each iteration $j$, we first calculate the centers $c_L, c_R$ of the clusters $\mathcal{S}_{L, j-1},  \mathcal{S}_{R, j-1}$ (Lines 3-4). 
Next, for each row $T[i]$, we compute the difference $\kw{diff}_i$ between its distances $\kw{dist}(T[i], c_L), \kw{dist}(T[i], c_R)$ to the two centers (Line 5) and add some Laplace noise $\kw{Lap}(\frac{\Delta(\kw{diff}_i)\cdot J}{\epsilon})$ to ensure DP (Line 6), where $\Delta(\kw{diff}_i)$ equals the absolute difference between the supremum and infimum of the distance function $\kw{dist}_i$. 
Then, if the perturbed difference $\widetilde{\kw{diff}}_i$ is positive, the row $T[i]$ is considered closer to the center $c_R$ and thus should be assigned to the right cluster $\mathcal{S}_{R, j}$; 
otherwise, it is assigned to $\mathcal{S}_{L, j}$ (Line 7).
After $J$ iterations, we adjust the sizes of the clusters $\mathcal{S}_{L, J}, \mathcal{S}_{R, J}$ through random sampling to ensure each of them has at least $\beta$ rows (Line 8);
this adjustment enhances the utility of the perturbed histogram in each leaf node, as a larger histogram is more resilient to DP noise.
Note that the distance function $\kw{diff}_i$ is customizable. 
In our experiments, we use the L1 distance for each numerical attribute and the Hamming distance for each categorical attribute; 
the distance function is defined as the sum of the attribute-wise distances.

\myparagraph{Case 3: $\kw{op} = \kw{OP.PRODUCT}$ (Line 11)}
We generate a product node $\kw{ProdNode}(\widetilde{\mathcal{S}}_L, \widetilde{\mathcal{S}}_R)$ by calling the \kw{ColSplit} mechanism (see Algorithm \ref{alg:colsplit}).
\kw{ColSplit} determines a differentially private column partition $(\widetilde{\mathcal{S}}_L, \widetilde{\mathcal{S}}_R)$, where $\widetilde{\mathcal{S}}_L, \widetilde{\mathcal{S}}_R$ are subsets of column indices.

\changemarker{For column splitting, our goal is to divide the given table into two subtables with low correlation so that the product of the marginal distributions of the subtables approximates the original joint distribution.
% Algorithms \ref{alg:rowsplit} presents the \kw{ColSplit} mechanism for column splitting, which is an instance of the exponential mechanism~\cite{dwork2006differential}.
Therefore, the \kw{ColSplit} mechanism attempts to minimize the normalized mutual information (NMI, see Definition \ref{def:nmi}) of the split subtables.
As shown in Algorithm \ref{alg:colsplit}, we first construct a candidate set $O$ of column partitions (Lines 1-3) and then sample a partition from the candidates as the output (Lines 4-5).
Note that we conduct $|attr(T)|$ iterations to select $|attr(T)|$ column partitions, rather than all possible column partitions, as the candidate set because it makes the computation tractable.
% We do not sample from all possible column partitions of the given table $T$ because it makes the computation intractable.
% Instead, we conduct $|attr(T)|$ iterations to select $|attr(T)|$ column partitions as the candidate set. 
In each iteration $j$, we uniformly sample a partition $o_j=(\mathcal{S}_{L,j}, \mathcal{S}_{R,j})$ from all half-sized column partitions such that $|\mathcal{S}_{L,j}|= |attr(T)| / 2$ since we observed that subtables with similar dimensionalities generally exhibit lower NMI.
Subsequently, we calculate NMI $\rho_j$ for candidate $o_j$ (Line 3).
After the iterations, we sample a partition $(\widetilde{\mathcal{S}}_L, \widetilde{\mathcal{S}}_R)$ from the candidate set $O$ with the following probability distribution (Line 4):
\begin{equation}
\label{eq:exponential}
    \Pr{(\widetilde{\mathcal{S}}_L, \widetilde{\mathcal{S}}_R) = o_j} = \frac{\kw{exp}(\frac{- \epsilon\cdot \rho_j}{2 \Delta(\kw{NMI} | O)})}{\sum_{j' \in \big[|attr(T)|\big]} \kw{exp}(\frac{-\epsilon\cdot \rho_{j'}}{2 \Delta(\kw{NMI}| O)})},
\end{equation}
where $\Delta(\kw{NMI} | O)=\max_{T,T', o\in O} |\kw{NMI}(T, o) - \kw{NMI}(T', o)|$ is the sensitivity of the NMI w.r.t candidates $O$.
Intuitively, the smaller the correlation between subtables $T[\widetilde{\mathcal{S}}_L], T[\widetilde{\mathcal{S}}_R]$, the lower the information loss from column splitting.
Therefore, the lower the NMI $\rho_j$, the higher the probability of sampling the partition $o_j$ in Equation (\ref{eq:exponential}).
Moreover, as the privacy budget $\epsilon$ decreases, the sampling probabilities become closer to a uniform distribution, increasing the likelihood of outputting a partition with high correlation.
}
\changemarker{
\begin{definition}[Normalized Mututal Information]
\label{def:nmi}
The entropy $H$ of a table $T$ is $H(T) = \sum_{x\in \mathcal{X}(T)} \Pr{x | T} \log_2 \frac{1}{\Pr{x | T}}$, where $\mathcal{X}(T)$ is the set of row values in $T$.
For a row or column partition $(\mathcal{S}_{L}, \mathcal{S}_{R})$ of table $T$, the NMI is:
\begin{equation*}
    \kw{NMI}(T, (\mathcal{S}_{L}, \mathcal{S}_{R})) = \frac{H(T[\mathcal{S}_{L}]) + H(T[\mathcal{S}_{R}]) - H(T)}{\sup(H(T))},
\end{equation*}
where $\sup(H(T))=\log_{2}|T|$ is the upper bound of $H(T)$.
\end{definition}
}

\subsubsection{Children generation}
If $\kw{op} = \kw{OP.LEAF}$, the children are set to \kw{null} because the \kw{parent} node is a leaf node (Lines 19--20).
Otherwise, we generate the \kw{left} and \kw{right} subtrees for the split subtables $T[\widetilde{\mathcal{S}}_L], T[\widetilde{\mathcal{S}}_R]$.
Specifically, we first allocate the remaining privacy budget $\bar{\epsilon}$ among the subtrees (Lines 15--18), which will be discussed in Section \ref{sec:dp:privspn}.
Then, given the allocated privacy budgets $\epsilon_L, \epsilon_R$, we call the \kw{PrivSPN} procedure for the subtables $T[\widetilde{\mathcal{S}}_L], T[\widetilde{\mathcal{S}}_R]$ to generate the \kw{left} and \kw{right} subtrees, respectively (Line 22).

\begin{algorithm}[t]
\small
    \caption{Operation Planning $\kw{Planning}(T, \epsilon)$}
    \label{alg:plan}
    \Input{table $T$, privacy budget $\epsilon$}
    \Output{next operation $\kw{op}$, privacy budget $\epsilon_\kw{op}$ for $\kw{op}$, remaining privacy budget $\bar{\epsilon}$}
    \KwParam{threshold for column splitting $\alpha$, minimum table size $\beta$, budget ratios $\gamma_1, \gamma_2$}
    \SetKwProg{Procedure}{procedure}{}{}
        \If{$|attr(T)| = 1$}{
        $\kw{op} \leftarrow \kw{OP.LEAF}$; $\epsilon_\kw{op} \leftarrow \epsilon$; $\bar{\epsilon}\leftarrow 0$\; 
        % \Return\ $\kw{op}, \epsilon_\kw{op}, \bar{\epsilon}$
    }
    % \If{$|T| < 2\beta$ and $|attr(T)| = 2$}{
    %     \Return\ \kw{OP.SUM}, $(0, \frac{\epsilon}{2}, \frac{\epsilon}{2})$
    % }
    \Else{
        $\tilde{\rho}, \epsilon_\kw{eval} \leftarrow \kw{CorrTrial}(T, \epsilon)$\; 
        $\kw{op} \leftarrow \kw{DecideOP}(T, \tilde{\rho})$\;
        $\epsilon_\kw{op}, \bar{\epsilon} \leftarrow \kw{AllocBudgetOP}(T, \kw{op}, \epsilon, \epsilon_\kw{eval})$\; 
    }
    \Return\ $\kw{op}, \epsilon_\kw{op}, \bar{\epsilon}$\;
        % \Procedure{$\kw{Planning}(T, \epsilon)$}{
        %     \If{$|T| < 2\beta$ and $|attr(T)| = 1$}{
        %         $\kw{op} \leftarrow \kw{OP.LEAF}$; $\epsilon_\kw{op} \leftarrow \epsilon$; $\bar{\epsilon}\leftarrow 0$\; 
        %         % \Return\ $\kw{op}, \epsilon_\kw{op}, \bar{\epsilon}$
        %     }
        %     % \If{$|T| < 2\beta$ and $|attr(T)| = 2$}{
        %     %     \Return\ \kw{OP.SUM}, $(0, \frac{\epsilon}{2}, \frac{\epsilon}{2})$
        %     % }
        %     \Else{
        %         $\tilde{\rho}, \epsilon_\kw{eval} \leftarrow \kw{CorrTrial}(T, \epsilon)$\; 
        %         $\kw{op} \leftarrow \kw{DecideOP}(T, \tilde{\rho})$\;
        %         $\epsilon_\kw{op}, \bar{\epsilon} \leftarrow \kw{AllocBudgetOP}(T, \kw{op}, \epsilon, \epsilon_\kw{eval})$\; 
        %     }
        %     \Return\ $\kw{op}, \epsilon_\kw{op}, \bar{\epsilon}$
        % }
        \Procedure{$\kw{CorrTrial}(T, \epsilon)$}{
            \If{$|T| \geq 2\beta$ and $|attr(T)| > 1$}{
                $\epsilon_\kw{eval} \leftarrow \epsilon \cdot \gamma_1 / \sigma(T)$\;
                $(\widetilde{\mathcal{S}}_L, \widetilde{\mathcal{S}}_R) \leftarrow \kw{ColSplit}(T, \epsilon_\kw{eval} \cdot \gamma_2)$\;
                $\tilde{\rho} \leftarrow \kw{NMI}(T, (\widetilde{\mathcal{S}}_L, \widetilde{\mathcal{S}}_R)) + \kw{Lap}(\frac{\Delta(\kw{NMI})}{\epsilon_\kw{eval} \cdot (1-\gamma_2)})$\;
            }
            \Else{
                $\epsilon_\kw{eval} \leftarrow 0$; $\tilde{\rho} \leftarrow 0$\;
            }
            \Return\ $\tilde{\rho}, \epsilon_\kw{eval}$
        }
        \Procedure{$\kw{DecideOP}(T, \tilde{\rho})$}{
            \If{$|T| \geq 2\beta$ and $|attr(T)| > 1$}{
                \If{$\tilde{\rho} \leq \alpha$}{
                    $\kw{op} \leftarrow \kw{OP.PRODUCT}$
                }
                \Else{
                    $\kw{op} \leftarrow \kw{OP.SUM}$
                }
            }
            \Else{
                $\kw{op} \leftarrow \kw{OP.PRODUCT}$
            }
            \Return{\kw{op}}
        }
        \Procedure{$\kw{AllocBudgetOP}(T, \kw{op}, \epsilon, \epsilon_\kw{eval})$}{
            \If{\kw{op} = \kw{OP.PRODUCT} and $|attr(T)| = 2$}{
                $\epsilon_\kw{op} \leftarrow 0$\;
            }
            \Else{
                $\epsilon_\kw{op} \leftarrow \epsilon/\sigma(T) - \epsilon_\kw{eval}$\;
            }
       
            \Return{$\epsilon_\kw{op}, \epsilon - \epsilon_\kw{eval} - \epsilon_\kw{op}$}
        }
        % \Procedure{$\kw{NMI}(T[\widetilde{\mathcal{S}}_L], T[\widetilde{\mathcal{S}}_R])$}{
        %     $I(T[\widetilde{\mathcal{S}}_L], T[\widetilde{\mathcal{S}}_R]) \leftarrow \sum_{s_L \in \kw{dom}(\mathcal{S}_L)}  \sum_{s_R \in \kw{dom}(\mathcal{S}_R)} \Pr[\mathcal{S}_L = s_L, \mathcal{S}_R = s_R | T] \log\frac{\Pr[\mathcal{S}_L = s_L, \mathcal{S}_R = s_R] |T}{\Pr[\mathcal{S}_L = s_L|T]}\Pr[\mathcal{S}_R = s_R | T]$\;
        % }
\end{algorithm}

\subsubsection{Planning}
The \kw{Planning} procedure takes the role of determining an operation $\kw{op}$ for the parent node, which can be creating a leaf/sum/product node.
Since all these types of operations should be performed in a differentially private manner, \kw{Planning} also allocates a privacy budget $\epsilon_\kw{op}$ for the parent node and calculates the remaining privacy budget $\bar{\epsilon}$ for the children.
Concretely, \kw{Planning} proceeds as follows.

\myparagraph{1. Leaf node validation (Lines 1--2)}
We validate if we should generate a leaf node as the parent node.
When the given table has only one attribute, i.e., $|attr(T)| = 1$, the operation must be creating a leaf node, i.e., $\leftarrow \kw{OP.LEAF}$, and all the given privacy budget $\epsilon$ should be allocated for \kw{op}, i.e., $\epsilon_\kw{op} = \epsilon$.

\myparagraph{2. Correlation trial (procedure \kw{CorrTrial})}
We preliminarily test the performance of column splitting, which will guide the selection of the operation \kw{op} through the subsequent \kw{DecideOP} procedure.
Specifically, we consume a privacy budget $\epsilon_\kw{eval}\cdot \gamma_2$ to generate a trial column partition $(\widetilde{\mathcal{S}}_L, \widetilde{\mathcal{S}}_R)$ and an additional privacy budget of $\epsilon_\kw{eval}\cdot(1-\gamma_2)$ to compute its noisy NMI $\tilde{\rho}$ (Lines 11-12):
$$\tilde{\rho}=\kw{NMI}(T, (\widetilde{\mathcal{S}}_L, \widetilde{\mathcal{S}}_R)) + \kw{Lap}(\frac{\Delta(\kw{NMI})}{\epsilon_\kw{eval} \cdot (1-\gamma_2)}).$$
% where $\kw{NMI}$ calculates the normalized mutual information of two subtables, $\Delta$ returns the global sensitvity of the given function, and $\kw{Lap}$ generates the Laplace noise (Line 12). 
% Intuitively, the smaller the correlation (measured by the noisy NMI) between the two subtables, the smaller the impact of column splitting on the data distribution, hence the better the performance of column splitting \cite{DeepDB}.
\changemarker{Intuitively, the noisy NMI metric $\tilde{\rho}$ serves as a predictor of whether the column splitting in the \kw{ParentGen} procedure will result in a high-fidelity data partition, which is useful for planning the next operation in \kw{DecideOP}.}
Note that when either column or row splitting is infeasible, we do not need to evaluate the correlation since the operation must be the feasible one (Line 14).

% \begin{definition}[Column Splitting]
% \label{def:colsplit}
%     The column splitting mechanism $\kw{ColSplit}(T, \epsilon)$ outputs a column partition $o=(\widetilde{\mathcal{S}}_L,\widetilde{\mathcal{S}}_R)\in O$ with probability proportional to $\kw{exp}(\frac{\epsilon\cdot q(T,o)}{2\Delta(q)})$, where $O$ is the set of all possible column partitions of $T$, $q(T, o)=\min_{(\mathcal{S}_L, \mathcal{S}_R)\in O} \kw{NMI}(T[\mathcal{S}_L], T[\mathcal{S}_R]) - \kw{NMI}(T[\widetilde{\mathcal{S}}_L], T[\widetilde{\mathcal{S}}_R])$, and $\Delta(q)=\max_{\forall o\in O, T \sim T'}$. 
% \end{definition}

\myparagraph{3. Operation decision (procedure \kw{DecideOP})}
We determine the operation \kw{op} for the parent node.
When the size of the given table is too small to allow for row splitting and the table contains more than one attribute, the operation must be column splitting (Lines 22--23).
\changemarker{However, when both column splitting and row splitting are feasible, we choose one of them according to whether the noisy NMI $\tilde{\rho}$ exceeds a predefined threshold $\alpha$ (Lines 17--21).
Intuitively, when the correlation (measured by $\tilde{\rho}$) between the vertically split subtables is too high, the information loss regarding data distribution caused by the column splitting is excessive; 
consequently, row splitting should be prioritized in this case.}
% If $\tilde{\rho}$ does not exceed the threshold $\alpha$, we select column splitting as the operation because its performance is satisfactory.

\myparagraph{4. Budget allocation (procedure \kw{AllcBudgetOP})}
We allocate privacy budgets $\epsilon_\kw{op}, \bar{\epsilon}$ to the parent node and its children, respectively.
In the case where the operation is to vertically split a table with only two attributes, no privacy budget should be consumed for the parent node because there is only one possible column partition (Lines 26--27).
In other cases, we uniformly allocate privacy budgets among all nodes of the SPN tree.
That is, we divide the privacy budget $\epsilon$ by a scale metric $\sigma(T)$ to determine the privacy budget $\epsilon_\kw{eval} + \epsilon_\kw{op}$ for correlation trial and parent node generation, where $\sigma(T)$ represents the maximum possible number of nodes in the SPN tree of table $T$: $$\sigma(T)=2 |T|\cdot|attr(T)| / \beta - 1.$$ 
Then, we allocate a privacy budget $\epsilon_\kw{eval} = \epsilon / \sigma(T)\cdot \gamma_1$ (Line 10) for correlation trial and $\epsilon_\kw{op}= \epsilon / \sigma(T) - \epsilon_\kw{eval}$ (Line 29) for parent node generation such that $\epsilon_\kw{eval} + \epsilon_\kw{op} = \epsilon / \sigma(T)$, where $\gamma_1 \in [0, 1]$. 

% Since nodes with less depth are more important for accurately modeling the data distribution, we allocate the privacy budget in a decreasing manner: always allocating half of the current privacy budget to the children, resulting in $\epsilon_\kw{op}=\frac{1}{2}\epsilon-\epsilon_\kw{eval}$ and $\bar{\epsilon}=\frac{1}{2}\epsilon$ (Line 31).
% However, when the operation is to split a table with only two attributes, since there is only one possible column partition, no privacy budget should be consumed for the parent node (Line 29).

\begin{algorithm}[t]
\small
    \caption{Private Fanout Construction $\kw{PrivFanout}$}
    \label{alg:fan}
    \Input{\small{Table $T_i$, SPN $t_i$, foreign key $FK_{i, j}, $privacy budget $\epsilon$}}
    \Output{Modified SPN $t_i'$}
    $\mathcal{A} \leftarrow$ Find the attribute with the most leaf nodes in $t_i$\;
    $t_i'\leftarrow$ Make a copy of $t_i$\;
    \For{each leaf node $\kw{leaf}_\mathcal{A}$ of $t_i$ of $\mathcal{A}$}{
        % Extract a subtable $T_i^{{\kw{leaf}_\mathcal{A}}}$ from $T_i$ according to the IDs of tuples in $\kw{leaf}_\mathcal{A}$\;
        \For{each value $fk$ of $FK_{i,j}$}{
            $\mathcal{F}_{i,j}[fk] \leftarrow$ count the rows in $\kw{leaf}_\mathcal{A}$ for which the foreign key $FK_{i,j}$ equals $fk$\;
        }
        $\widetilde{\mathcal{F}}_{i,j} \leftarrow \mathcal{F}_{i,j} + \kw{Lap}(\frac{\Delta(\mathcal{F}_{i,j})}{\epsilon})$\;
        $\kw{leaf}_{FK_{i, j}} \leftarrow \kw{LeafNode}(\widetilde{\mathcal{F}}_{i,j})$\;
        $\kw{parent} \leftarrow \kw{ProdNode}(\mathcal{A}, FK_{i, j})$\;
        \small{Replace $\kw{leaf}_\mathcal{A}$ in $t_i'$ with subtree $(\kw{parent}, \kw{leaf}_\mathcal{A}, \kw{leaf}_{FK_{i, j}})$}
    }
\end{algorithm}

\subsection{Private Fanout Construction}
After constructing SPNs for all private tables in the input database, we follow prior work~\cite{yang2022sam, yang2020neurocard} to utilize \textit{fanout distribution}, i.e., the distribution of the foreign key in the referencing table, to model their primary-foreign key references.
However, to maintain fanout distributions for all joined keys, prior work~\cite{yang2022sam, yang2020neurocard} uses a full outer join of all the tables, which causes large space overhead.
% The data attached to each primary key is amplified by the key's fanout, potentially multiple times based on join dependencies, resulting in unnecessary overhead.

To address this issue, we propose Algorithm \ref{alg:fan}, which models primary-foreign key references by augmenting SPNs with leaf nodes that store fanout distributions.
Specifically, to construct the reference relation between tables $T_i$ and $T_j$ where $T_i$ refers to $T_j$, we first find a certain attribute $\mathcal{A}$ and traverse all its leaf nodes.
For each leaf node $\kw{leaf}_\mathcal{A}$, we identify the rows in $T_i$ whose column of attribute $\mathcal{A}$ is stored in $\kw{leaf}_\mathcal{A}$ and count the fanout frequencies of the foreign key $FK_{i,j}$ w.r.t. the identified rows in a \textit{fanout table} $\mathcal{F}_{i,j}$ (Lines 3--5).
Then, the fanout table is perturbed by adding some Laplace noise and stored in a new leaf node $\kw{leaf}_{FK_{i, j}}$ (Lines 6--7).
Finally, we replace $\kw{leaf}_\mathcal{A}$ with a product node that connects $\kw{leaf}_\mathcal{A}$ and $\kw{leaf}_{FK_{i, j}}$ (Lines 8--9).
Additionally, to minimize the impact of Laplace noise on the fanout distribution, we require attribute $\mathcal{A}$ to have the largest leaf nodes. 
Intuitively, the overall fanout distribution of a foreign key is the average of the distributions stored in its leaf nodes. 
Thus, the variance caused by the Laplace noise decreases as the number of leaf nodes increases.

\begin{figure}[t]
  \centering
  %\vspace{-0.5cm}
  % \setlength{\abovecaptionskip}{-1pt}

  \includegraphics[width=1.0\linewidth]{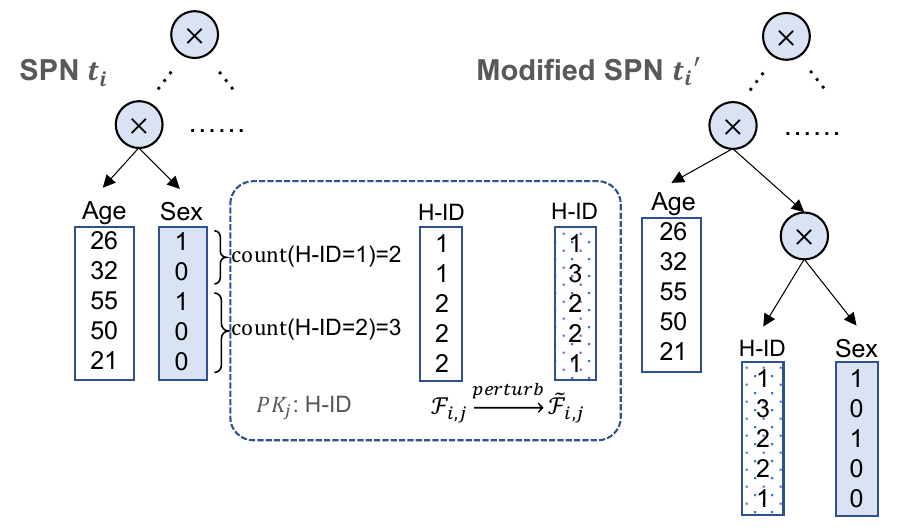}
   \caption{Fanout table construction.}
  \label{fig:modified-SPN}
  %\vspace{-0.5cm}
\end{figure}

\begin{example}
    Figure \ref{fig:modified-SPN} shows an example of how PrivBench augments the original SPN $t_i$ with the leaf nodes of the foreign key \kw{H-ID}.
    First, we identify that the attribute \kw{Sex} possesses the most leaf nodes in $t_i$.
    For each leaf node of attribute \kw{Sex}, we locate the corresponding foreign keys using the row indices.
    Next, we compute the fanout table, perturb the fanouts, and store the results in a new leaf node.
    Finally, the new leaf node is added to the SPN as a sibling to the leaf node of attribute \kw{Sex} and as the child of a newly added product node, resulting in a modified SPN $t_i'$.
\end{example}

\subsection{SPN-Based Database Synthesis}
Algorithm \ref{alg:SampleDataFromSPN} shows how we synthesize a table $\widehat{T}_i$ given an SPN $t_i$.
\changemarker{If the root node of $t_i$ is a leaf node, we synthesize a table $\widehat{T}_i$ by histogram sampling.
Specifically, each column $\widehat{T}_i$ is sampled from the marginal distribution represented by the perturbed histogram or fanout table stored in the leaf (Lines 1--2), and the size of $\widehat{T}_i$ matches the total count of the histogram or fanout table.
}
If it is a sum (product, resp.) node, we recursively apply the \kw{SampleDataFromSPN} procedure to the left and right children of $t_i$ and vertically (horizontally, resp.) concatenate the returned tables $\widehat{T}_L$ and $\widehat{T}_R$ (Lines 4--9).
Finally, we obtain a table $\widehat{T}_i$ that has a data distribution similar to that of the private table $T_i$.
By synthesizing tables for all modified SPNs, we obtain a synthetic database $\widehat{D}$.
\changemarker{Since the input SPN $t_i$ is constructed in a differentially private manner, sampling data from it does not consume any privacy budget according to the post-processing property of DP~\cite{dwork2006differential}.}

\begin{algorithm}[t]
\small
    \caption{SampleDataFromSPN}
    \label{alg:SampleDataFromSPN}
    \Input{SPN $t_i$}
    \Output{Synthetic table $\widehat{T}_i$}    
    \If{$t_i.\kw{root}$ is a leaf node}{
        \small{$\widehat{T}_i \leftarrow$ \changemarker{Sample data from marginal distribution represented by histogram $t_i.\kw{\widetilde{his}}$ or fanout table $t_i.\widetilde{\mathcal{F}}_{i,j}$}}
    }
    
    \Else{
        $\widehat{T}_L \leftarrow \kw{SampleDataFromSPN}(t_i.\kw{left})$\;
        $\widehat{T}_R \leftarrow \kw{SampleDataFromSPN}(t_i.\kw{right})$\;
        \If{$t_i.\kw{root}$ is a sum node}{
            $\widehat{T}_i \leftarrow$ Vertically concatenate $\widehat{T}_L$ and $\widehat{T}_R$\;
        }
        \If{$t_i.\kw{root}$ is a product node}{
            $\widehat{T}_i \leftarrow$ Horizontally concatenate $\widehat{T}_L$ and $\widehat{T}_R$\;
        }
    }
    \Return $\widehat{T}_i$
\end{algorithm}

%%% Local Variables:
%%% mode: latex
%%% TeX-master: "paper"
%%% End:
% LocalWords:  dataset

% \input{queries}

% \input{privacy}
% \input{privacy_revise}
% \input{privacy_2024}
% \input{analysis-revise}
\section{Theoretical Analysis}
\label{sec:privacy}
In this section, we provide a rigorous analysis of the privacy guarantee and time complexity of PrivBench.
% All the missing proofs can be found in our technical report \cite{appendix}.
All the missing proofs can be found in Appendix~\ref{sec:appendix}.

\subsection{Privacy Analysis on \kw{PrivSPN}}
\label{sec:dp:privspn}
In this subsection, we first prove that the \kw{Planning}, \kw{ParentGen}, and \kw{ChildrenGen} procedures satisfy DP, and then prove that \kw{PrivSPN} therefore achieves DP.

\subsubsection{Analysis of \kw{Planning}}
The \kw{Planning} procedure consists of the \kw{CorrTrial}, \kw{DecideOP}, and \kw{AllocBudgetOP} subprocedures.
If the subprocedures satisfy DP, we can conclude that \kw{Planning} ensures DP according to the sequential composition theorem of DP~\cite{dwork2006differential}.

\myparagraph{\kw{CorrTrial}}
(1) When $|T| \geq 2\beta$ and $|attr(T)| > 1$, the \kw{CorrTrial} procedure employs the $(\epsilon_\kw{eval}\cdot\gamma_2)$-DP mechanism $\kw{ColSplit}(T, \epsilon_\kw{eval}\cdot\gamma_2)$ to compute a perturbed partition $\big(\widetilde{\mathcal{S}}_L, \widetilde{\mathcal{S}}_R\big)$.
Then, it calculates the NMI of $(\widetilde{\mathcal{S}}_L, \widetilde{\mathcal{S}}_R)$ and injects the Laplace noise $\kw{Lap}(\frac{\Delta(\kw{NMI})}{\epsilon_\kw{eval} \cdot (1-\gamma_2)})$ into it, which ensures $\big(\epsilon_\kw{eval} \cdot (1-\gamma_2)\big)$-DP.
Consequently, according to the sequential composition theorem of DP~\cite{dwork2006differential}, $\kw{CorrTrial}(T, \epsilon)$ satisfies $\epsilon_\kw{eval}$-DP.
(2) When $|T| < 2\beta$ or $|attr(T)| = 1$, since varying the value of any row in $T$ does not have any impact on the output $(\tilde{\rho}, \epsilon_\kw{eval})$, $\kw{CorrTrial}(T, \epsilon)$ satisfies $0$-DP according to Definition \ref{def:table-dp}.
Note that we adopt the bounded DP interpretation;
in the case of unbounded DP, adding/removing a row from $T$ causes the procedure to take the other conditional branch (i.e., Lines 9--12), thereby breaking the DP guarantee.

\myparagraph{\kw{DecideOP} and \kw{AllocBudgetOP}}
Similar to the second case of \kw{CorrTrial}, for both procedures \kw{DecideOP} and \kw{AllocBudgetOP}, because changing the value of any row in $T$ does not affect the table size $|T|$ and the number of attributes $|attr(T)|$, it also does not impact their outputs.
Therefore, both procedures ensure $0$-DP.

\myparagraph{\kw{Planning}}
When $|attr(T)|=1$, \kw{Planning} satisfies $0$-DP since any change to $T$ does not impact the output (Lines 1--2);
otherwise, as it sequentially combines \kw{CorrTrial}, \kw{DecideOP}, and \kw{AllocBudgetOP}, $\kw{Planning}(T, \epsilon)$ satisfies $\epsilon_\kw{eval}$-DP according to the sequential composition theorem of DP~\cite{dwork2006differential}, where the value of $\epsilon_\kw{eval}$ differs in different cases (see Lines 10 and 14).
 
\begin{lemma}
\label{lem:Planning}
    If $|attr(T)| > 1$, $\kw{Planning}(T, \epsilon)$ satisfies table-level $\big(\epsilon\cdot \gamma_1 / \sigma(T)\big)$-DP;
    otherwise, it satisfies table-level $0$-DP.
\end{lemma}

\subsubsection{Analysis of \kw{ParentGen}}
Then, we show that in any case of the given operation $\kw{op}$, \kw{ParentGen} achieves DP.
% \myparagraph{Case 1: $\kw{op} = \kw{OP.LEAF}$}
(1) When $\kw{op} = \kw{OP.LEAF}$, we ensure DP using the Laplace mechanism.
That is, we inject Laplacian noise $\kw{Lap}(\frac{\Delta(\kw{his})}{\epsilon_\kw{op}})$ into the histogram $\kw{his}(T)$, which satisfies $\epsilon_\kw{op}$-DP.
% \myparagraph{Case 2: $\kw{op} = \kw{OP.SUM}$} 
(2) When $\kw{op} = \kw{OP.SUM}$, we call $\kw{RowSplit}(T, \epsilon_\kw{op})$ to generate a perturbed row partition $(\widetilde{S}_L, \widetilde{S}_R)$, which achieves $\epsilon_\kw{op}$-DP.
% \myparagraph{Case 3: $\kw{op} = \kw{OP.PRODUCT}$}
(3) When $\kw{op} = \kw{OP.PRODUCT}$, our column splitting mechanism $\kw{ColSplit}(T, \epsilon_\kw{op})$ is essentially an instance of the exponential mechanism~\cite{dwork2014algorithmic}, thereby guaranteeing $\epsilon_\kw{op}$-DP.

\begin{lemma}
\label{lem:ParentGen}
    $\kw{ParentGen}(T,\kw{op},\epsilon_\kw{op})$ satisfies tabel-level $\epsilon_\kw{op}$-DP.
\end{lemma}

\subsubsection{Analysis of \kw{ChildrenGen}}
Next, we analyze the DP guarantee of $\kw{ChildrenGen}(T, \kw{op}, \widetilde{\mathcal{S}}_L, \widetilde{\mathcal{S}}_R, \bar{\epsilon})$ in different cases.
% \myparagraph{Case 1: $\kw{op} = \kw{OP.LEAF}$}
(1) When $\kw{op} = \kw{OP.LEAF}$, since \kw{ChildrenGen} always returns two null children, it must satisfy $0$-DP.
Therefore, in this case, $\kw{PrivSPN(T, \epsilon)}$ achieves $\epsilon$-DP.
% \myparagraph{Case 2: $\kw{op} = \kw{OP.SUM}$} 
(2) When $\kw{op} = \kw{OP.SUM}$, Theorem \ref{thm:sumnode} demonstrates a parallel composition property of DP in the case of row splitting: If constructing the left and right subtrees satisfies $\epsilon_L$-DP and $\epsilon_R$-DP, respectively, then the entire process of constructing both subtrees satisfies $\max\{\epsilon_L, \epsilon_R\}$-DP.
% (2) When $\kw{op} = \kw{OP.SUM}$, Theorem \ref{thm:sumnode} shows that in the case of row splitting, the privacy budgets for constructing the tree $(\kw{parent}, \kw{left}, \kw{right})$ satisfies a parallel composition property:
% $$\bar{\epsilon} = \max\{\epsilon_L, \epsilon_R\}.$$
% $$\epsilon = \epsilon_\kw{eval} + \epsilon_\kw{op} + \max\{\epsilon_L, \epsilon_R\} = \epsilon_\kw{eval} + \epsilon_\kw{op} + \bar{\epsilon}.$$
Therefore, to optimize the utility of the subtrees, we maximize each subtree's privacy budget by setting $\epsilon_L = \epsilon_R = \bar{\epsilon}$.
Note that Theorem \ref{thm:sumnode} differs from the celebrated parallel composition theorem \cite{dwork2006differential}:
their theorem assumes unbounded DP while our theorem considers bounded DP.
% their theorem assumes unbounded DP and a deterministic function for row splitting, while our theorem considers bounded DP and requires that the row partition is given.
(3) When $\kw{op} = \kw{OP.PRODUCT}$, we allocate privacy budgets based on the scales of the subtrees (Line 18).
Intuitively, a subtree with a larger scale should be assigned with a larger privacy budget to balance their utility.
Thus, their privacy budgets $\epsilon_L, \epsilon_R$ are proportional to their scales $\sigma(T[\widetilde{S}_L]), \sigma(T[\widetilde{S}_R])$.
Note that publishing the scales does not consume any privacy budget since varying any row of $T$ does not change $\sigma(T[\widetilde{S}_L])$ and $\sigma(T[\widetilde{S}_R])$.
% Because the scale metric $\sigma$ is a function of the given table's size and dimensionality, publishing the scales does not consume any privacy budget.
% The scale metric $\sigma$ is a function of the given table's size and dimension, which will be defined in Theorem \ref{thm:down-top} based on our privacy analysis.

\begin{theorem}[Parallel composition under bounded DP]
\label{thm:sumnode}
    % Consider $K$ algorithms $\mathcal{M}_1,\dots,\mathcal{M}_K$ that satisfy $\epsilon_1$-DP,...,$\epsilon_K$-DP, respectively.
    Given a row partition $(\mathcal{S}_1,\dots, \mathcal{S}_K)$, publishing $\mathcal{M}_1(T[\mathcal{S}_1]),\dots,\mathcal{M}_K(T[\mathcal{S}_K])$ satisfies table-level $\max\set{\epsilon_1,\dots, \epsilon_K}$-DP, where $\mathcal{S}_k$ is a subset of row indices and $\mathcal{M}_k$ is a table-level $\epsilon_k$-DP algorithm, $\forall k \in [K]$.
\end{theorem}

% \begin{theorem}
% \label{thm:sumnode}
%     When the root node of a subtree $T_0$ is a sum node, the privacy budget required to construct this tree $\epsilon_0 = \epsilon_R + \max\{\epsilon_1, \epsilon_2\}$, where $\epsilon_1$ is the privacy budget consumed to construct the left subtree of $T_0$, and $\epsilon_2$ is the privacy cost of the right subtree.
% \end{theorem}

Consequently, when $\kw{op} = \kw{OP.SUM}$ or $\kw{op} = \kw{OP.PRODUCT}$, we can prove the DP guarantee by mathematical induction:
when $\kw{PrivSPN}(T[\widetilde{S}_L], \epsilon_L)$ (resp. $\kw{PrivSPN}(T[\widetilde{S}_R], \epsilon_R)$) returns a subtree with only a leaf node, it satisfies $\epsilon_L$-DP (resp. $\epsilon_R$-DP);
otherwise, assuming $\kw{PrivSPN}(T[\widetilde{S}_L], \epsilon_L)$ (resp. $\kw{PrivSPN}(T[\widetilde{S}_R], \epsilon_R)$) satisfies $\epsilon_L$-DP (resp. $\epsilon_R$-DP), according to the sequential composition theorem~\cite{dwork2006differential} or Theorem \ref{thm:sumnode}, we conclude that $\kw{ChildrenGen}(T, \kw{op}, \widetilde{\mathcal{S}}_L, \widetilde{\mathcal{S}}_R, \bar{\epsilon})$ satisfies $\bar{\epsilon}$-DP, where $\bar{\epsilon} = \epsilon_L + \epsilon_R$ or $\bar{\epsilon} = \max\{\epsilon_L, \epsilon_R\}$.
% \myparagraph{Case 3: $\kw{op} = \kw{OP.PRODUCT}$}
% The proof for Case 3 follows the same logic as that for Case 2.
% Note that varying any row of $T$ does not affect $\sigma(T[\widetilde{S}_L])$ and $\sigma(T[\widetilde{S}_R])$ since they are functions of the table size and the number of attributes.

\begin{lemma}
\label{lem:ChildrenGen}
    When $\kw{op} = \kw{OP.LEAF}$, $\kw{ChildrenGen}(T, \kw{op}, \widetilde{\mathcal{S}}_L, \widetilde{\mathcal{S}}_R, \bar{\epsilon})$ satisfies table-level $0$-DP;
    otherwise, it satisfies table-level $\bar{\epsilon}$-DP.
\end{lemma}

\subsubsection{Analysis of \kw{PrivSPN}}
Based on Lemmas \ref{lem:Planning}, \ref{lem:ParentGen}, and \ref{lem:ChildrenGen}, we conclude the DP guarantee of \kw{PrivSPN} through sequential composition.

\begin{lemma}
    $\kw{PrivSPN}(T, \epsilon)$ satisfies table-level $\epsilon$-DP.
\end{lemma}

In addition, since \kw{PrivSPN} allocates the total privacy budget $\epsilon$ from the root to the leaf nodes in a top-down manner, it may result in insufficient budgets allocated to the leaves.
Consequently, a practical question arises: If we would like to guarantee a privacy budget of $\epsilon_\kw{leaf}$ for each leaf node, how large should the total privacy budget $\epsilon$ be?
Theorem \ref{thm:down-top} answers this question: 
A privacy budget $\epsilon= \big(\epsilon_\kw{leaf}\cdot \sigma(T)\big)$ is sufficient.
This also explains why the privacy budgets $\epsilon_L$ and $\epsilon_R$ should be proportional to the corresponding scales $\sigma(T[\widetilde{S}_L])$ and $\sigma(T[\widetilde{S}_R])$ in Line 18 of Algorithm \ref{alg:SPN}.
% In our experiments, we set the scale metric as defined in Theorem \ref{thm:down-top} for \kw{PrivSPN}.

\begin{theorem}
  \label{thm:down-top}
  Consider a procedure $\kw{PrivSPN}(T, \epsilon)$ that publishes an SPN with $K$ leaf nodes $\kw{PrivSPN}(S_1, \epsilon_\kw{leaf}),\dots, \kw{PrivSPN}(S_K, \epsilon_\kw{leaf})$, where $S_k$ is a single-column subtable of table $T$, $\forall k \in [K]$.
  $\kw{PrivSPN}(T, \epsilon)$ satisfies table-level $\big(\epsilon_\kw{leaf}\cdot \sigma(T)\big)$-DP.
\end{theorem}

\subsection{Privacy Analysis on \kw{PrivFanout}}
For each leaf node $\kw{leaf}_{FK}$, \kw{PrivFanout} injects the Laplace noise into the fanout table $\mathcal{F}_{i,j}$ (Line 6), which implements an $\epsilon$-differentially private Laplace mechanism.
Then, according to Theorem \ref{thm:sumnode}, publishing all the leaf nodes $\kw{leaf}_{FK}$ with perturbed fanout tables also satisfies $\epsilon$-DP due to the property of parallel composition.
Therefore, \kw{PrivFanout} ensures DP.

\begin{lemma}
    $\kw{PrivFanout}(T_i, t_i, FK_{i,j}, \epsilon)$ satisfies table-level $\epsilon$-DP.
\end{lemma}

\subsection{Privacy Analysis on \kw{PrivBench}}
\label{sec:analysis_privbench}
% Then, given that both \kw{PrivSPN} and \kw{PrivFanout} satisfy DP, the DP guarantee of the \kw{PrivBench} algorithm is established according to the sequential composition theorem of DP.

% Given that the \kw{PrivSPN} algorithm ensures DP, the \kw{PrivBench} algorithm obviously satisfies DP according to the postprocessing theorem of DP.
% Intuitively, both the \kw{CalcFanout} and \kw{SampleDataFromSPNs} procedures do not require any raw tables as input and process perturbed SPNs, which does not compromise DP.

Given that both \kw{PrivSPN} and \kw{PrivFanout} satisfy DP, the DP guarantee of the \kw{PrivBench} algorithm is established according to Theorem \ref{thm:table-to-db-dp}.
Note that to measure the indistinguishability level of database-level DP, we cannot simply accumulate the privacy budgets $\epsilon_i$ assigned to each private table $T_i \in D$ for table-level DP.
Intuitively, while table-level DP only guarantees the indistinguishability of neighboring tables, database-level DP requires indistinguishability of two tables that differ in at most $\tau_i$ tuples because those tuples may depend on the same tuple in the primary private table.
Consequently, as shown in Corollary \ref{coro:privbench}, the indistinguishability levels of procedures $\kw{PrivSPN}$ and $\kw{PrivFanout}$ at the table level should be reduced by factor $\tau_i$ for database-level DP.
Therefore, to ensure database-level $\epsilon$-DP for \kw{PrivBench}, we can allocate the total privacy budget $\epsilon$ as follows: 
$$\epsilon^s_i=\frac{\epsilon \cdot \gamma}{\tau_1 + \dots +\tau_n}, \quad \epsilon^f_i=\frac{\epsilon \cdot (1-\gamma)}{\tau_1 \cdot |FK(T_1)| + \dots + \tau_n \cdot |FK(T_n)|},$$
% $$\epsilon^s_i=\frac{\epsilon \cdot \gamma}{\tau_i \cdot n}, \quad \epsilon^f_i=\frac{\epsilon \cdot (1-\gamma)}{\tau_i \cdot n_\kw{pairs}},$$
where $\gamma\in[0,1]$ is the ratio of privacy budget allocated for SPN construction, and $|FK(T_i)|$ is the number of foreign keys in table $T_i$.
In our experiments, we follow the above setting to allocate privacy budgets $\epsilon^s_i, \epsilon^s_f$ in PrivBench.

% That is, the $\kw{PrivBench}(D, \epsilon)$ algorithm that consists of procedures $\kw{PrivSPN}(T_i, \frac{\epsilon \cdot (1-\gamma_f)}{\tau_i \cdot n})$ and $\kw{PrivSPN}(T_i, \frac{\epsilon \cdot \gamma_f}{\tau_i \cdot n_\kw{pairs}})$ satisfies $\epsilon$-DP because $$\epsilon = \sum_{i\in[T]} \frac{\epsilon \cdot (1-\gamma_f)}{\tau_i \cdot n} \cdot \tau_i + \sum_{T_i \text{ refers to } T_j} \frac{\epsilon \cdot \gamma_f}{\tau_i \cdot n_\kw{pairs}} \cdot \tau_i.$$

\begin{theorem}
\label{thm:table-to-db-dp}
    % The composite mechanism $\mathcal{M}(D)=\big(\mathcal{M}_1(T_1),...,\mathcal{M}_n(T_n)\big)$ satisfies database-level $(\sum_{i\in[n]}\tau_i\cdot\epsilon_i)$-DP, where each $\mathcal{M}_i$ satisfies table-level $\epsilon_i$-DP, and each $\tau_i$ is the maximum multiplicity of $T_i$ for all $i\in[n]$.    
    Given that $\mathcal{M}_i$ satisfies table-level $\epsilon_i$-DP for all $i\in[n]$, the composite mechanism $\mathcal{M}(D)=\big(\mathcal{M}_1(T_1),...,\mathcal{M}_n(T_n)\big)$ satisfies database-level $(\sum_{i\in[n]}\tau_i\cdot\epsilon_i)$-DP.
    % , where $\tau_i$ is the maximum multiplicity of $T_i$.
\end{theorem}

\begin{corollary}[of Theorem \ref{thm:table-to-db-dp}]
\label{coro:privbench}
    \kw{PrivBench} satisfies database-level $(\sum_{i\in[n]}\tau_i\cdot\epsilon^s_i+\sum_{T_i \text{ refers to } T_j} \tau_i\cdot\epsilon^f_i)$-DP.
\end{corollary}

\subsection{Time Complexity Analysis}
\label{sec:time_complexity}
We show that $\ours(D)$ completes in polynomial time.
(1) To construct a private SPN $t_i$ for a table $T_i$, \kw{PrivSPN} needs to recursively call itself to generate a full binary tree, where each node requires one call of \kw{PrivSPN}.
Because each leaf node stores a single-column subtable of $T_i$ with at least $\beta$ rows and one attribute, $t_i$ can have at most $(|attr(T_i)| \cdot |T_i|/\beta)$ leaf nodes.
Consequently, $t_i$ can have at most $(2|attr(T_i)|\cdot |T|/\beta -1)$ nodes, which can be computed by $\kw{PrivSPN}(T_i, \epsilon^s_{i})$ with $\mathcal{O}(|attr(T_i)|\cdot |T_i|)$ recursive calls.
Then, we can easily observe that the non-recursive work including running the \kw{Planning} and \kw{ParentGen} procedures can finish in polynomial time.
Therefore, $\kw{PrivSPN}(T_i, \epsilon^s_{i})$ completes in polynomial time.
(2) To construct a fanout table for each pair of referential tables $T_i, T_j$, \kw{PrivFanout} needs to enumerate at most $(|attr(T_i)| \cdot |T_i|/\beta)$ leaf nodes of $t_i$ to replace each leaf with a subtree, which finishes in $\mathcal{O}(|attr(T_i)|\cdot |T_i|)$ time.
(3) For each modified SPN $t_i'$, $\kw{SampleDataFromSPN}(t_i')$ requires $\mathcal{O}(|attr(T_i)|\cdot |T_i|)$ recursive calls and some polynomial-time non-recursive work;
thus, it finishes in $\mathcal{O}(|attr(T_i)|\cdot |T_i|)$ time. 
Given that \kw{PrivSPN}, \kw{PrivFanout}, and \kw{SampleDataFromSPN} finish in polynomial time, PrivBench completes in polynomial time.
\begin{lemma}
    Given a database $D=\{T_1,\dots, T_n\}$, $\kw{PrivBench}(D)$ finishes in $\mathcal{O}\big(\sum_{i\in[n]} \kw{poly} (|attr(T_i)|, |T_i|)\big)$ time.
\end{lemma}

\captionsetup[subfigure]{ singlelinecheck=false, font=footnotesize, justification=centering}
\begin{figure}[t]
  \centering
  %\vspace{-0.5cm}
  \subfloat[Selectivity over Original Data]{
        \includegraphics[width=0.4\linewidth]{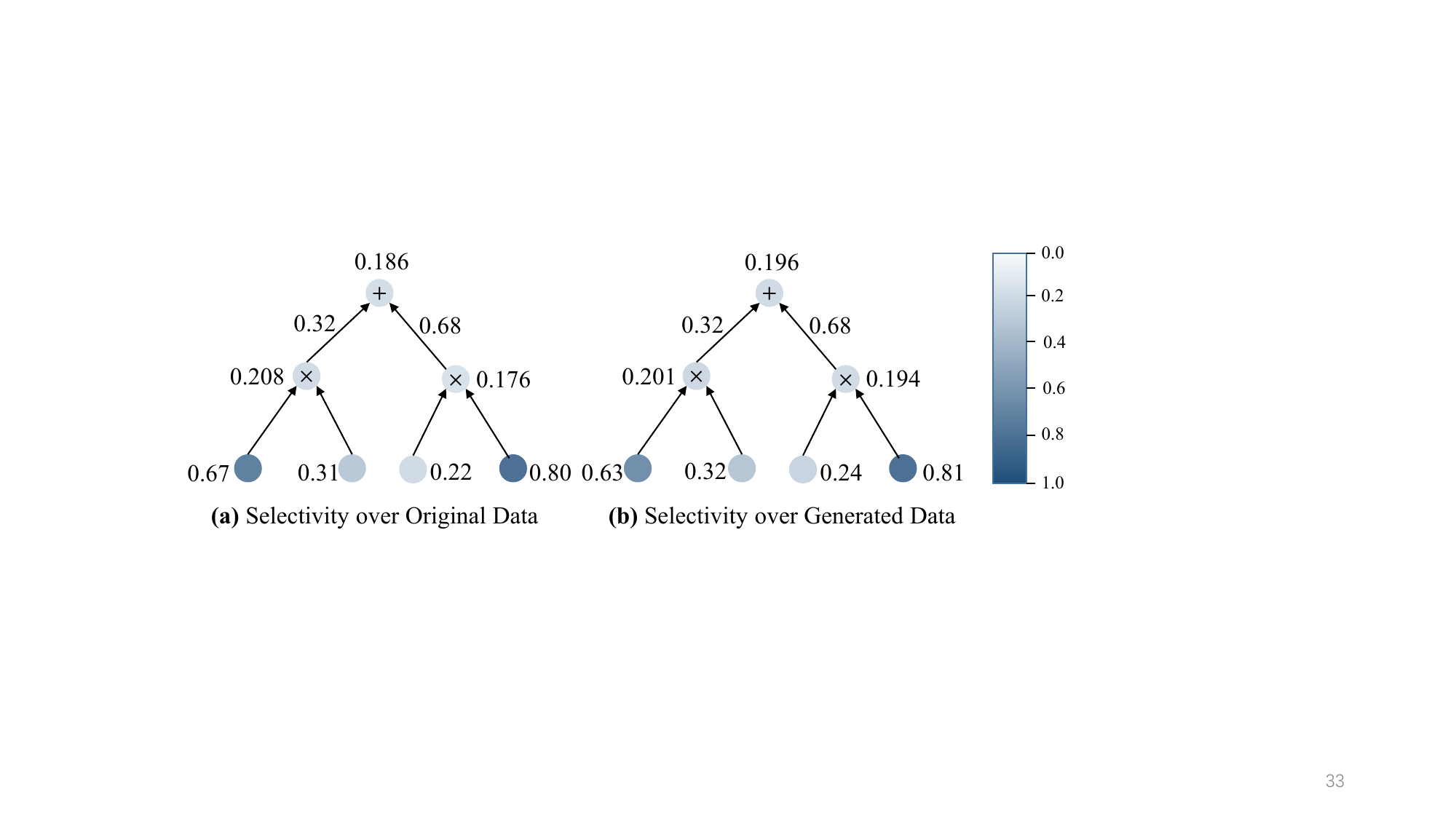}
        \label{fig:QHM:l}

  }
  \quad
  \subfloat[Selectivity over Generated Data]{
        \includegraphics[width=0.5\linewidth]{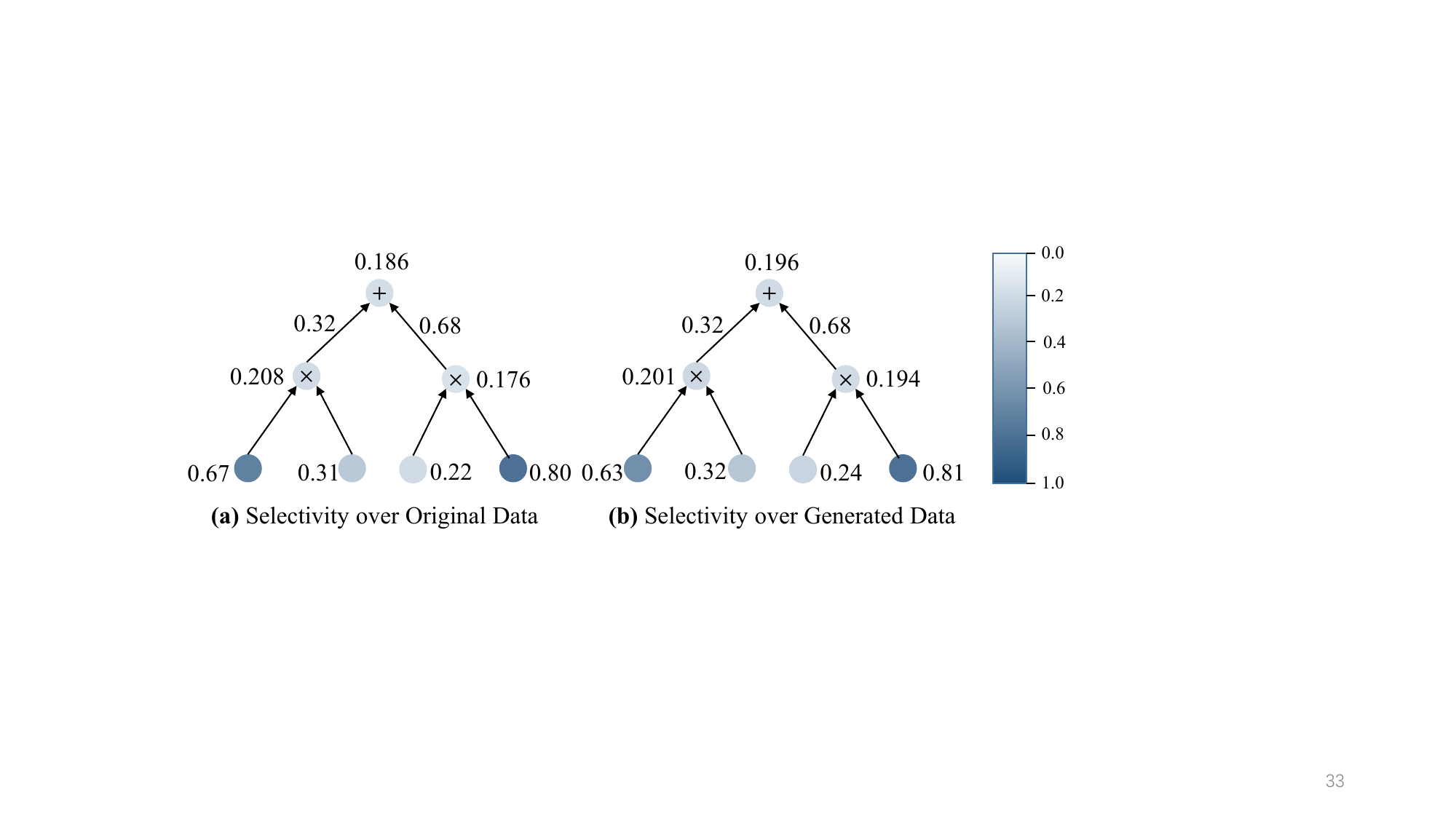}
        \label{fig:QHM:r}        
  }
  
  \caption{An example of Query Heat Map for a given query. The edge values indicate the proportions of rows assigned to each subtree of a sum node, while the node values represent the selectivities for the given query.}
  \label{fig:QHM}
  %\vspace{-0.5cm}
\end{figure}

\section{Discussion}
\label{sec:discussion}
In this section, we discuss several practical issues regarding PrivBench.

\subsubsection*{Privacy-preserving/secure data valuation}
In recent years, research on data trading has been gaining increasing attention in the database community~\cite{li2014theory, koutris2015query, chen2019towards, zheng2020money, liu2021dealer, zheng2022fl, zheng2023secure}. 
A key research question is how to address the Arrow Information Paradox for privacy-preserving/secure data valuation. 
Specifically, since the utility of data is task-specific, data buyers often need to test the data to determine its value before reaching a deal. 
However, once a buyer gains access to the data, the seller faces the risk of the data being replicated without payment. 
To tackle this issue for trading machine learning models, Zheng et al.~\cite{zheng2023secure} utilized homomorphic encryption (HE) to enable buyers to evaluate the value of data in an encrypted environment. 
PrivBench can also address the paradox by allowing buyers to assess the value of data based on synthetic databases while ensuring the privacy and security of the original data. 
Compared to HE-based methods, PrivBench-empowered data valuation, though sacrificing some accuracy due to the injection of DP noise, offers better computational efficiency and facilitates trustworthy, real-time data trading.

\subsubsection*{Parameter selection}
\changemarker{
Here, we discuss how to configure the parameters of PrivBench.
The privacy budget $\epsilon$ is a general parameter in DP, and various methods have been proposed for selecting it, such as economic methods \cite{hsu2014differential} or voting mechanisms \cite{kohli2018epsilon}.
The other parameters, including $\alpha$, $\beta$, $\gamma$, $\gamma_1$, and $\gamma_2$, however, are unique to PrivBench and can be determined empirically.
% In Section \ref{sec:exp}, we conduct a sensitivity analysis of each parameter, and the results show that the KL divergence and Q-error appear to be a convex function of each parameter. 
In practice, since DP assumes that the benchmark publisher is trusted, they can access both the original and synthetic data to evaluate the latter's quality in terms of data distribution similarity and query runtime similarity. 
Therefore, every time the publisher releases a benchmark, they can incrementally optimize the parameters based on experimental results to improve the performance of future benchmarks. 
% Additionally, in our experiments, we apply the optimal set of parameters, which is identified by the sensitivity analysis, to all datasets. 
Additionally, in our experiments, we apply the same parameters to all datasets. 
The results demonstrate that PrivBench consistently outperforms other methods, which suggests the transferability of the parameter settings among different datasets. 
In other words, even when adopting the same parameters for different benchmarking tasks, publishers still can achieve high-quality synthetic databases.
}

% \subsubsection*{Supported data types}
% While most works on privacy-preserving database synthesis only support categorical data, PrivBench additionally supports numeric data. 
% This is because the leaf nodes of the SPN can classify numeric data by range and generate histograms for these ranges. 
% Synthetic numeric data is then sampled from these ranges using the histograms. 
% Theoretically, this data synthesis method could also support more complex data types, such as text data and time series data, provided that we can divide the domain of the target data into meaningful subdomains to generate its histogram. 
% Therefore, PrivBench is highly extensible, and its support for more data types could be a future research direction.

\subsubsection*{Visualization}
Visualization enables us to intuitively evaluate whether the data generated by PrivBench closely approximates the original data as a benchmark. 
As shown in Figure \ref{fig:QHM}, after synthesizing data using PrivBench, we can obtain a pair of SPNs built on the original and synthetic data, respectively. 
After testing a bundle of queries, we can determine the selectivity value for each leaf node. 
Subsequently, we can color the leaf nodes based on the magnitude of the selectivity value, with darker colors indicating higher values, creating a colored SPN. Ultimately, by comparing the colored SPN of the original data with that of the synthetic data, we can assess whether the two sets of data perform similarly in benchmarking.

\begin{table*}[ht]
\footnotesize
\caption{Summary of datasets, workloads, and baselines.}
\label{tab:datasets}
\begin{tabular}{ccccc}
\hline
\textbf{Dataset} & \textbf{\#(Private) Tables} & \textbf{(\#Rows, \#Cols.) of Private Table} & \textbf{Workload} & \textbf{Baseline} \\ \hline
Adult & 1 (1) & (45222, 15) & SAM-1000 & \begin{tabular}[c]{@{}c@{}}PrivSyn, PrivBayes, AIM, DataSynthesizer, MST, Exponential-PreFair,\\ Greedy-PreFair, DPGAN (GPU-accelerated), PrivMRF (GPU-accelerated)\end{tabular} \\ \hline
California & 2 (2) & \begin{tabular}[c]{@{}c@{}}Primary: (616115, 10)\\ Secondary: (1690642, 16)\end{tabular} & California-400 & PrivLava (GPU-accelerated) \\ \hline
JOB-light & 6 (2) & \begin{tabular}[c]{@{}c@{}}Primary: (2283757, 2)\\ Secondary: (35824707, 3)\end{tabular} & JOB-229, MSCN-226 & PrivLava (GPU-accelerated) \\ \hline
\end{tabular}
\end{table*}

\section{Experimental Evaluation}
\label{sec:exp}
% In this section, we conduct extensive experiments to demonstrate the effectiveness of PrivBench.

\subsection{Experimental Settings}\label{sec:expsetting}
\myparagraph{Research questions}
In this section, we conduct extensive experiments to answer the following research questions.
% Our experiments answer the following research questions.
\begin{itemize}[leftmargin=*]
% \begin{itemize}
    \item \textbf{Data distribution similarity}: Is the database synthesized by PrivBench closer to the original database in data distribution?
    \item \textbf{Query runtime similarity}: For executing query workloads, is the database synthesized by PrivBench closer to the original database in terms of query runtime performance?
    \item \changemarker{\textbf{Synthesis time}: Can PrivBench synthesize databases efficiently?}
    % \item \changemarker{\textbf{Sensitivity}: How do the parameters of PrivBench affect its performance?}
\end{itemize}

\myparagraph{Datasets, query workloads, and baselines} 
% Table \ref{tab:datasets} summarizes the datasets, query workloads, and baselines used to verify the performance of PrivBench, which are widely adopted in database synthesis research.
% We introduce them as follows.
We use the following datasets, query workloads, and baselines to verify the performance of PrivBench, which are summarized in Table~\ref{tab:datasets}.
% which are widely adopted in the database synthesis research and are summarized in Table~\ref{tab:datasets}.
% These datasets, which are widely adopted in database synthesis research, are introduced as follows. 
% We use the \textit{TPC-H} benchmark, which simulates real-world decision support scenarios. TPC-H contains 8 relational tables and provides a scale factor to scale up the database. 
% We input and scale up the database produced by TPC-H under a certain scale factor. 
% The output database can be compared with that produced by TPC-H under the same amount of data in terms of query runtime, cost, and cardinalities.
% In addition, we use the following datasets employed in prior work~\cite{yang2022sam, cai2023privlava}.

The \textit{Adult} dataset~\cite{kohavi1996scaling} contains a single private table of census information about individuals. 
We execute \textit{SAM-1000}~\cite{yang2022sam}, a workload with 1000 randomly generated queries, to evaluate the query runtime performance of PrivBench and baselines.
The baselines cover mainstream differentially private single-relation data synthesis methods, including DataSynthesizer~\cite{ping2017datasynthesizer}, PrivBayes~\cite{zhang2017privbayes}, PrivSyn~\cite{zhang2021privsyn},   MST~\cite{mckenna2021winning}, PrivMRF~\cite{cai2021privmrf}, AIM~\cite{mckenna2022aim}, Exponential-PreFair~\cite{pujol2022prefair}, Greedy-PreFair~\cite{pujol2022prefair}, and DPGAN~\cite{liu2023tabular}.
Note that PrivMRF is the single-relation version of PrivLava~\cite{cai2023privlava}.

The \textit{California} dataset~\cite{cai2023privlava} consists of two private tables about household information.
We randomly generate 400 queries following the method from prior work~\cite{kipf2018learned} to create the query workload for this dataset, named \textit{California-400}.
We use PrivLava~\cite{cai2023privlava}  as the sole baseline, as it is the only SOTA method that supports multi-relation database synthesis.

The \textit{JOB-light} dataset~\cite{yang2022sam} includes six tables related to movies, which are extracted from the \textit{Internet Movie Database (IMDB \cite{imdb})}.
We designate two of these tables as private, while the remaining four are public.
Consequently, we generate synthetic tables only for the two private tables and combine these synthetic tables with the public tables into a single database for testing.
We adopt the \textit{MSCN} and \textit{JOB} query workloads from prior work~\cite{yang2022sam} and extract subsets of 226 and 229 queries, respectively, that involve operations over the two private tables.
% that involve joins between the two private tables, containing 226 and 229 queries, respectively.
These subsets, named \textit{MSCN-226} and \textit{JOB-229}, serve as the query workloads for the Job-light dataset.
PrivLava~\cite{cai2023privlava} serves as the baseline for multi-relation synthesis.

\begin{table*}[t]
\small
    \centering
    \caption{Performance of data synthesis methods on the Adult dataset with the SAM-1000 query workload.}
\begin{tabular}{l|ccc|cccc|cccc|cc}
\hline
\multicolumn{1}{c|}{\multirow{2}{*}{\textbf{Model}}} & \multicolumn{3}{c|}{\textbf{KLD}} & \multicolumn{4}{c|}{\textbf{Q-error}} & \multicolumn{4}{c|}{\textbf{Query runtime discrepancy (\%)}} & \multicolumn{2}{c}{\textbf{Synthesis time (s)}} \\ \cline{2-14} 
\multicolumn{1}{c|}{} & \textbf{2-way} & \textbf{3-way} & \textbf{4-way} & \textbf{Mean} & \textbf{Median} & \textbf{75th} & \textbf{MAX} & \textbf{Mean} & \textbf{Median} & \textbf{75th} & \textbf{MAX} & \textbf{Learning} & \textbf{Inference} \\ \hline
PrivSyn & 13.01 & 14.68 & 15.03 & 2.56 & 1.30 & 1.60 & 22.79 & 8.76 & 5.86 & 13.83 & 47.12 & 72.61 & 10.59 \\
PrivBayes & 10.37 & 12.64 & 13.74 & 1.41 & 1.31 & 1.55 & 3.45 & 9.67 & 3.71 & 13.95 & 95.82 & 12.58 & 5.32 \\
AIM & 7.75 & 10.13 & 11.63 & \textbf{1.34} & \textbf{1.24} & \textbf{1.50} & \textbf{2.98} & 40.60 & 39.48 & 41.97 & 77.04 & 1865 & 144.17 \\
DataSynthesizer & 6.92 & 9.84 & 11.94 & 1.40 & 1.26 & 1.58 & 3.55 & 8.36 & 3.08 & 11.09 & 62.42 & \textbf{0.12} & 0.51 \\
MST & 5.24 & 7.07 & 8.53 & 1.38 & \textbf{1.24} & 1.52 & 5.71 & 14.73 & 14.47 & 16.57 & 35.34 & 178.80 & 0.11 \\
Exponential-PreFair & 5.12 & 6.98 & 8.48 & 1.43 & 1.30 & 1.61 & 7.2 & 18.50 & 17.93 & 20.23 & 37.78 & 1736 & 0.14 \\
Greedy-PreFair & 4.73 & 6.52 & 8.02 & 1.44 & 1.29 & 1.60 & 17.16 & 4.69 & 3.73 & 6.21 & 26.54 & 180.33 & \textbf{0.10} \\
DPGAN & 4.56 & 6.68 & 8.55 & 1.37 & 1.27 & 1.52 & 3.53 & 8.18 & 6.28 & 10.77 & 57.68 & 746.53 & 0.49 \\
PrivMRF (PrivLava) & 4.55 & 6.29 & 7.79 & 1.38 & 1.25 & 1.52 & 4.19 & 3.11 & 2.61 & 4.29 & 21.6 & 1703 & 41.36 \\
PrivBench & \textbf{1.71} & \textbf{2.79} & \textbf{4.05} & 1.40 & 1.25 & 1.56 & 7.78 & \textbf{1.84} & \textbf{1.48} & \textbf{2.38} & \textbf{12.96} & 2.35 & 0.13 \\ \hline
\end{tabular}
\label{tab:adult}
\end{table*}

\begin{table*}[ht]
\small
\centering
\caption{Performance of database synthesis methods on the California dataset with the California-400 query workload.}
\begin{tabular}{|c|cccccc|cccccc|cccccc|}
\hline
\multirow{3}{*}{\textbf{Model}} & \multicolumn{6}{c|}{\textbf{2-way KLD}} & \multicolumn{6}{c|}{\textbf{3-way KLD}} & \multicolumn{6}{c|}{\textbf{3-way KLD}} \\ \cline{2-19} 
 & \multicolumn{6}{c|}{\textbf{Privacy budget $\epsilon$}} & \multicolumn{6}{c|}{\textbf{Privacy budget $\epsilon$}} & \multicolumn{6}{c|}{\textbf{Privacy budget $\epsilon$}} \\ \cline{2-19} 
 & \textbf{0.1} & \textbf{0.2} & \textbf{0.4} & \textbf{0.8} & \textbf{1.6} & \textbf{3.2} & \textbf{0.1} & \textbf{0.2} & \textbf{0.4} & \textbf{0.8} & \textbf{1.6} & \textbf{3.2} & \textbf{0.1} & \textbf{0.2} & \textbf{0.4} & \textbf{0.8} & \textbf{1.6} & \textbf{3.2} \\ \hline
\multicolumn{1}{|l|}{PrivLava} & 2.30 & 1.28 & 1.03 & 0.93 & 0.88 & 0.88 & 3.93 & 2.21 & 1.77 & 1.56 & 1.48 & 1.46 & 5.50 & 3.20 & 2.59 & 2.27 & 2.14 & 2.09 \\
\multicolumn{1}{|l|}{PrivBench} & 0.87 & 0.77 & 0.69 & 0.63 & 0.60 & 0.58 & 1.85 & 1.74 & 1.62 & 1.54 & 1.51 & 1.50 & 2.94 & 2.84 & 2.67 & 2.58 & 2.56 & 2.55 \\ \hline
\multirow{3}{*}{\textbf{Model}} & \multicolumn{6}{c|}{\textbf{Q-error}} & \multicolumn{6}{c|}{\textbf{Query runtime discrepancy (\%)}} & \multicolumn{6}{c|}{\textbf{Synthesis time (s)}} \\ \cline{2-19} 
 & \multicolumn{6}{c|}{\textbf{Privacy budget $\epsilon$}} & \multicolumn{6}{c|}{\textbf{Privacy budget $\epsilon$}} & \multicolumn{3}{c}{\multirow{2}{*}{\textbf{Learning}}} & \multicolumn{3}{c|}{\multirow{2}{*}{\textbf{Inference}}} \\ \cline{2-13}
 & \textbf{0.1} & \textbf{0.2} & \textbf{0.4} & \textbf{0.8} & \textbf{1.6} & \textbf{3.2} & \textbf{0.1} & \textbf{0.2} & \textbf{0.4} & \textbf{0.8} & \textbf{1.6} & \textbf{3.2} & \multicolumn{3}{c}{} & \multicolumn{3}{c|}{} \\ \hline
\multicolumn{1}{|l|}{PrivLava} & 791.15 & 66.46 & 8.31 & 2.26 & 1.45 & 1.33 & 5.20 & 4.10 & 3.83 & 3.05 & 2.88 & 3.18 & \multicolumn{3}{c}{1520} & \multicolumn{3}{c|}{26013} \\
\multicolumn{1}{|l|}{PrivBench} & 26.07 & 5.57 & 7.17 & 5.11 & 5.18 & 5.18 & 5.70 & 5.34 & 4.59 & 4.18 & 4.26 & 4.29 & \multicolumn{3}{c}{38.31} & \multicolumn{3}{c|}{4.92} \\ \hline
\end{tabular}
\label{tab:california}
\end{table*}

\begin{table*}[ht]
\small
\centering
\caption{Performance of database synthesis methods on the JOB-light dataset with the JOB-229 and MSCN-226 query workloads.}
\begin{tabular}{|c|ccccccccccclc|}
\hline
\multirow{2}{*}{\textbf{Model}} & \multicolumn{3}{c|}{\multirow{2}{*}{\textbf{2-way KLD}}}          & \multicolumn{4}{c|}{\textbf{Q-error}}                                               & \multicolumn{3}{c|}{\textbf{Query runtime discrepancy (\%)}}                      & \multicolumn{3}{c|}{\textbf{Synthesis time}}                     \\ \cline{5-14} 
                                & \multicolumn{3}{c|}{}                                             & \multicolumn{2}{c}{\textbf{JOB-229}} & \multicolumn{2}{c|}{\textbf{MSCN-226}}       & \textbf{JOB-229} & \multicolumn{2}{c|}{\textbf{MSCN-226}}             & \multicolumn{2}{c}{\textbf{Learning}} & \textbf{Inference}       \\ \hline
\multicolumn{1}{|l|}{PrivLava}  & \multicolumn{3}{c|}{5.40}                                         & \multicolumn{2}{c}{152.05}           & \multicolumn{2}{c|}{34.52}                   & 32.77            & \multicolumn{2}{c|}{35.47}                         & \multicolumn{2}{c}{8101}              & 18104                    \\
\multicolumn{1}{|l|}{PrivBench} & \multicolumn{3}{c|}{3.16}                                         & \multicolumn{2}{c}{3.33}             & \multicolumn{2}{c|}{11.32}                   & 20.10            & \multicolumn{2}{c|}{13.24}                         & \multicolumn{2}{c}{40.72}             & 30.22                    \\ \hline
\multirow{4}{*}{\textbf{Model}} & \multicolumn{13}{c|}{\textbf{Q-error}}                                                                                                                                                                                                                                                             \\ \cline{2-14} 
                                & \multicolumn{7}{c|}{\textbf{JOB-229}}                                                                                                                   & \multicolumn{6}{c|}{\textbf{MSCN-226}}                                                                                                   \\ \cline{2-14} 
                                & \multicolumn{3}{c|}{\textbf{Cardinality}}                         & \multicolumn{4}{c|}{\textbf{No. joins}}                                             & \multicolumn{3}{c|}{\textbf{Cardinality}}                             & \multicolumn{3}{c|}{\textbf{No. joins}}                          \\ \cline{2-14} 
                                & \textbf{Low} & \textbf{Med.} & \multicolumn{1}{c|}{\textbf{High}} & \textbf{1}        & \textbf{2}       & \textbf{3} & \multicolumn{1}{c|}{\textbf{4}} & \textbf{Low}     & \textbf{Med.} & \multicolumn{1}{c|}{\textbf{High}} & \multicolumn{2}{c}{\textbf{1}}        & \textbf{2}               \\ \hline
\multicolumn{1}{|l|}{PrivLava}  & 421.59       & 13.80         & \multicolumn{1}{c|}{22.69}         & 8.73              & 15.79            & 74.49      & \multicolumn{1}{c|}{1827.51}    & 84.43            & 9.04          & \multicolumn{1}{c|}{10.43}         & \multicolumn{2}{c}{23.21}             & 39.59                    \\
\multicolumn{1}{|l|}{PrivBench} & 4.81         & 1.67          & \multicolumn{1}{c|}{3.55}          & 3.31              & 2.96             & 3.68       & \multicolumn{1}{c|}{4.51}       & 29.39            & 2.09          & \multicolumn{1}{c|}{2.62}          & \multicolumn{2}{c}{22.53}             & 6.30                     \\ \hline
\end{tabular}
\label{tab:job-light}
\end{table*}

\myparagraph{Metrics}
% We employ the \textit{KLD}, \textit{Q-error}, and \textit{synthesis time} metrics to answer the three research questions, respectively.
For the first research question, we evaluate $\lambda$-way KLD to measure data distribution similarity between original and synthetic data.
Specifically, following prior work~\cite{cai2021privmrf}, we enumerate all possible $\lambda$-way marginals for each private relation, where $\lambda\in \{2, 3, 4\}$.
For each $\lambda$-way marginal, we compute the KLD between the original and synthetic tables.
The $\lambda$-way KLD is then obtained as the average KLD across all $\lambda$-way marginals for all private relations.
To prevent infinite KLD values, a small constant of $10^{-10}$ is added to each probability distribution.
For the second research question regarding query runtime similarity, in addition to the Q-error metric, we also use the \textit{query runtime discrepancy}, which measures the average percentage difference in query runtime.
For the third research question, we report the time consumed for database synthesis.

\myparagraph{Parameters}
% We conduct a sensitivity analysis in Section \ref{sec:sensi_analysis}, which shows the effects of the parameters of PrivBench on its performance.
% Based on this analysis, for all datasets, we set the default parameters of PrivBench as follows:
We set the default parameters of PrivBench as follows:
$\alpha=0.5, \beta=10000, \gamma=0.9, \gamma_1=0.5, \gamma_2=0.5$.
Then, the total privacy budget $\epsilon$ is set to $3.2$ by default, following the setting in the SOTA work, PrivLava~\cite{cai2023privlava}. 
% The privacy budgets $\epsilon^s_i$ for SPN construction and $\epsilon^f_i$ for fanout construction are allocated as discussed in Section \ref{sec:privacy}.
The setting of the number of iterations $J$ for \kw{RowSplit} follows prior analysis~\cite{su2016differentially}.
% Additionally, PrivBench supports synthesizing a database of any size
% by setting a scale factor $\kw{sf}$ for each table therein, which is set to 1 by default.
Additionally, the baselines PrivSyn, MST, PrivMRF, AIM, Exponential-PreFair, Greedy-PreFair, DPGAN, and PrivLava only satisfy $(\epsilon, \delta)$-DP, where the $\delta$ parameter allows the privacy guarantee to be violated with a small probability.
We set $\delta=10^{-12}$ for all these baselines.

\myparagraph{Environment}
% \noindent\textbf{Environment}.
All experiments are implemented in Python and performed on a Linux server with an Intel(R) Core(R) Silver i9-13900K 3.0GHz CPU, an NVIDIA GeForce RTX 4090 GPU, and 64GB RAM.
The DBMS we use to test query execution is PostgreSQL 15.2 with default settings for all parameters.
We accelerate matrix computation using the GPU for the computationally inefficient baselines DPGAN, PrivMRF, and PrivLava.

\subsection{Evaluation on Distribution Similarity}\label{sec:expsingle}
\myparagraph{Single-relation synthesis}
Table~\ref{tab:adult} shows the results on the Adult dataset.
% From the column of KLD, we observe the following findings.
We can see that for different values of $\lambda$, PrivBench performs significantly better than all baselines on the $\lambda$-way KLD metric.
This indicates that PrivBench can synthesize single-table data with a higher fidelity of data distribution.

\myparagraph{Multi-relation synthesis}
Tables \ref{tab:california} and \ref{tab:job-light} compare multi-relation database synthesis methods on the California and JOB-light datasets.
We can observe that in various cases, the performance of PrivBench in terms of KLD is comparable to or even better than that of the SOTA method PrivLava.
Moreover, as shown in Table \ref{tab:california}, as the privacy budget decreases, the deterioration trend of the KLD metric for PrivBench is more moderate compared to PrivLava. 
This indicates that PrivBench has a greater advantage in scenarios where the original data is highly sensitive or data sharing is strictly restricted, resulting in a tiny privacy budget.

\subsection{Evaluation on Query Runtime Similarity}
\label{subsec: sim_for_tpch}

\myparagraph{Single-relation synthesis}
For the Adult dataset, as shown in Table \ref{tab:adult}, on Q-error-related metrics, both PrivBench and the SOTA methods perform exceptionally well. 
However, in terms of query runtime discrepancy, PrivBench significantly outperforms them. 
This indicates that the single-table benchmarks generated by PrivBench can better preserve similarity in query runtime performance.

% Consequently, we conclude that our approach can synthesize single-relation databases with better performance in query runtime similarity.

\myparagraph{Multi-relation synthesis}
Table \ref{tab:california} presents the Q-error and runtime discrepancy for PrivBench and PrivLava under different privacy budgets for the California dataset. 
Similar to the case with the KLD metric, PrivBench underperforms compared to PrivLava on the Q-error metric when the privacy budget is relatively large. 
However, as the privacy budget decreases, PrivLava's Q-error performance deteriorates rapidly, while PrivBench remains relatively stable and significantly surpasses PrivLava when the privacy budget is very limited. 
This demonstrates PrivBench's robustness in data-sensitive scenarios.
Additionally, PrivBench and PrivLava both perform excellently in query runtime discrepancy, with no significant difference between them.

% As the privacy budget increases, the errors of DP-based methods generally tend to rise due to the increasing DP noise. 
% Under various privacy budgets, PrivBench outperforms all baselines in terms of Q-error, demonstrating the robustness of PrivBench in optimizing query runtime similarity.
% Under the same privacy budget, PrivBench outperforms all baselines in terms of Q-error, while it performs on par with PrivLava in terms of runtime discrepancy since the Q-errors of both of them are extremely small on this dataset.

\changemarker{For the JOB-light dataset, we present the performance of the two synthesis methods under different types of queries in Table \ref{tab:job-light}. 
Specifically, the JOB-229 and MSCN-226 query workloads are divided into three groups based on query cardinality levels: low, medium, and high. 
The results indicate that PrivBench achieves consistently lower Q-error across all cardinality levels.
When the cardinality level is low, PrivLava's Q-error deteriorates significantly, whereas PrivBench performs markedly better than PrivLava.
% As the cardinality level increases, PrivBench's Q-error becomes even smaller, possibly indicating its advantage in handling large-domain data.
% }
% \changemarker{
Moreover, the queries are also categorized based on the number of joins in Table \ref{tab:job-light}. 
As the number of joins increases, the Q-error of PrivLava can become extremely large, while that of PrivBench remains low. 
This suggests that PrivBench may perform better on enterprise-level databases with complex internal dependencies among tables.}

\subsection{Evaluation on Synthesis Time}
\myparagraph{Single-relation synthesis}
\changemarker{
We report the time required to synthesize the Adult dataset in Table \ref{tab:adult}. 
The synthesis time can be divided into the learning time needed to build the synthesis model and the inference time required to sample data from the model. 
In terms of the overall synthesis time, PrivBench outperforms all the baselines.
Among them, PrivMRF and DPGAN require extensive matrix computations and even utilize the GPU to accelerate these operations, yet they still require a substantial amount of time to train the data synthesis models.
% DPGAN, a differentially private version of the generative adversarial network (GAN), requires multiple rounds of training deep neural networks on the whole data, resulting in significantly longer learning time. 
In contrast, PrivBench ensures high fidelity of synthetic data in terms of data distribution and query runtime performance, while requiring only minimal time for model learning and inference.
This suggests that PrivBench can effectively adapt to frequent benchmark updates and even support real-time, high-fidelity benchmark releases.
}

\myparagraph{Multi-relation synthesis}
As shown in Tables \ref{tab:california} and \ref{tab:job-light}, PrivBench remains highly efficient in synthesizing multi-table databases, with a significantly lower time cost than PrivLava. 
Note that in our experiment, PrivLava also employs the GPU to greatly accelerate its model learning and inference, while PrivBench relies solely on CPU computation.
Although PrivBench requires traversing leaf nodes to construct fanout tables, the workload of fanout construction is light, and the number of leaf nodes is linear to both the number of records and attributes.
Consequently, modeling primary-foreign key dependencies among tables is also efficient in PrivBench.

\section{Related Work}
\label{sec:related}

\subsubsection*{DP-based data synthesis}
\changemarker{With the growing awareness of user privacy and the introduction of data protection regulations worldwide, DP-based data synthesis has gained increasing attention in recent years~\cite{li2014dpsynthesizer, snoke2018pmse, chen2020gs, ge2020kamino, torkzadehmahani2019dp, xie2018differentially, jordon2018pate, liu2023tabular, long2021g, cai2021privmrf, mckenna2019graphical, zhang2017privbayes, cai2023privlava, pujol2022prefair, ping2017datasynthesizer, li2021dpsyn, zhang2021privsyn, mckenna2021winning, vietri2020new, mckenna2022aim}. 
The vast majority of existing work focuses on enhancing the fidelity of the synthetic data in data distribution while overlooking the fidelity in query runtime performance, making them less promising for benchmark publishing scenarios. 
Among them, deep learning-based methods~\cite{chen2020gs, ge2020kamino, torkzadehmahani2019dp, xie2018differentially, jordon2018pate, liu2023tabular, long2021g} train neural networks to fit the joint distribution of the original data. 
Query workload-based methods~\cite{mckenna2021winning, vietri2020new, mckenna2022aim} optimize some given queries' accuracy, rather than their runtime performance, to learn a workload-optimal data distribution. 
Graphical methods~\cite{cai2021privmrf, mckenna2019graphical, zhang2017privbayes, cai2023privlava, pujol2022prefair, ping2017datasynthesizer} achieve high-fidelity data distribution by learning the marginal distribution of the data through graphical models, but these efforts still neglect query runtime performance. 
To the best of our knowledge, PrivBench is the first differentially private data synthesis framework that simultaneously optimizes fidelity in both data distribution and query runtime. 
Additionally, apart from the recent SOTA, PrivLava~\cite{cai2023privlava}, PrivBench is the only framework that can synthesize multi-relation databases while ensuring database-level DP.}

\subsubsection*{SPN-based data management}
\changemarker{SPNs~\cite{poon2011sum}, which excel in representing complex dependencies and data distributions, have been employed for various data analysis tasks, such as image processing and natural language processing~\cite{sanchez2021sum}.
However, the application of SPNs in data management tasks remains underdeveloped.
Hilprecht et al.~\cite{DeepDB} have employed SPNs for approximate query processing and cardinality estimation. 
Recently, Kroes et al.~\cite{kroes2023generating} proposed the first SPN-based data synthesis method. 
However, unlike PrivBench, their approach only supports single-table databases and does not provide privacy guarantees. 
To the best of our knowledge, PrivBench provides the first differentially private method for SPN construction.
Moreover, Treiber~\cite{treiber2020cryptospn} proposed an SPN inference framework based on secure multiparty computation, called CryptoSPN, which ensures that parties with distributed data can compute query results without sharing data.
However, since CryptoSPN does not protect the privacy of query results, attackers may still infer the original data from the query results, which compromises privacy.}
% Therefore, a potential future direction is to apply PrivBench's DP-based SPN construction method to enhance privacy protection for CryptoSPN.}

\section{Conclusion}
\label{sec:concl}
In this paper, we delve into the domain of synthesizing databases that preserve privacy for benchmark publishing. Our focus is on creating a database that upholds DP while ensuring that the performance of query workloads on the synthesized data closely aligns with the original data. We propose PrivBench, an innovative synthesis framework designed to generate high-fidelity data while ensuring robust privacy protection. PrivBench utilizes SPNs at its core to segment and sample data from SPN leaf nodes and conducts subsequent operations on these nodes to ensure privacy preservation. It allows users to adjust the granularity of SPN partitions and determine privacy budgets through parameters, crucial for customizing levels of privacy preservation. The data synthesis algorithm is proven to uphold DP. Experimental results highlight PrivBench's capability to create data that not only maintains privacy but also demonstrates high query performance fidelity, showcasing improvements in query runtime discrepancy, query cardinality error, and KLD over alternative approaches.
The promising research directions for future work are twofold: (i) identifying the optimal privacy budget allocation scheme and (ii) generating data and query workloads simultaneously.

\begin{acks}
This work was supported by National Key R\&D program of China 2021YFB3301500, SZU Research Instrument Development and Cultivation 2023YQ017, Guangdong Province Key Laboratory of PHPC 2017B030314073, JSPS KAKENHI 21K19767, 23K17456, 23K24851, 23K25157, 23K28096, and CREST JPMJCR22M2. 
\end{acks}

\clearpage
\balance
\bibliographystyle{abbrv}
\bibliography{validate}

\clearpage

\appendix
\onecolumn

\section{Missing Proofs}
\label{sec:appendix}

\begin{proof}[Proof of Theorem \ref{thm:sumnode}]
Let $\mathcal{M}(T, (\mathcal{S}_1,\dots, \mathcal{S}_K))=\big(\mathcal{M}_1(T[\mathcal{S}_1]),\dots,\mathcal{M}_K(T[\mathcal{S}_K])\big)$. Given a row partition $(\mathcal{S}_1,\dots, \mathcal{S}_K)$, for neighboring tables $T, T'$ and for any possible output $o=(o_1,\dots,o_K)$ of $\mathcal{M}$, we have
\begin{align*}
    &\Pr{\mathcal{M}(T, (\mathcal{S}_1,\dots, \mathcal{S}_K))=o} = \prod_{k\in [K]} \Pr{\mathcal{M}_k(T[\mathcal{S}_k]=o_k)},\\
    &\Pr{\mathcal{M}(T', (\mathcal{S}_1,\dots, \mathcal{S}_K))=o} = \prod_{k\in [K]} \Pr{\mathcal{M}_k(T'[\mathcal{S}_k]=o_k)}.
\end{align*}
Because $(\mathcal{S}_1,\dots, \mathcal{S}_K)$ is a row partition such that $\cap_{k\in[K]}\mathcal{S}_k = \emptyset$, there exists at most one element $\mathcal{S}_j$ of the partition such that
\begin{align*}
    T[\mathcal{S}_j] \sim T'[\mathcal{S}_j] \text{, and } \forall k \neq j, &  T[\mathcal{S}_j] = T'[\mathcal{S}_j].
\end{align*}
Then, because $\mathcal{M}_j$ satisfies table-level $\epsilon_j$-DP, we have
    \begin{align*}
        \forall o_j, \Pr{\mathcal{M}_j(T[\mathcal{S}_j])=o_j} \leq \exp(\epsilon_j) \cdot \Pr{\mathcal{M}_j(T'[\mathcal{S}_j])=o_j}
    \end{align*}
Therefore, given a row partition $(\mathcal{S}_1,\dots, \mathcal{S}_K)$, for any pair of neighboring tables $T, T'$ and for any possible output $o=(o_1,\dots,o_K)$ of $\mathcal{M}$, we have
\begin{align*}
    &\Pr{\mathcal{M}(T, (\mathcal{S}_1,\dots, \mathcal{S}_K))=o} \\
    = & \prod_{k\in [K]} \Pr{\mathcal{M}_k(T[\mathcal{S}_k]=o_k)}\\
    = & \Pr{\mathcal{M}_j(T[\mathcal{S}_j]=o_j)} \cdot \prod_{k\in [K], k\neq j}  \Pr{\mathcal{M}_k(T[\mathcal{S}_k]=o_k)}\\
    \leq & \exp(\epsilon_j) \cdot \Pr{\mathcal{M}_j(T'[\mathcal{S}_j])=o_j} \cdot \prod_{k\in [K], k\neq j}  \Pr{\mathcal{M}_k(T'[\mathcal{S}_k]=o_k)} \\
    = & \exp(\epsilon_j) \cdot \prod_{k\in [K]} \cdot \Pr{\mathcal{M}_k(T'[\mathcal{S}_k]=o_k)} \\
    \leq & \max\set{\epsilon_1,\dots,\epsilon_K} \cdot \Pr{\mathcal{M}_k(T'[\mathcal{S}_k]=o_k)} \\
    = & \max\set{\epsilon_1,\dots,\epsilon_K} \cdot \Pr{\mathcal{M}(T', (\mathcal{S}_1,\dots, \mathcal{S}_K))=o}
\end{align*}
Therefore, we conclude that publishing $\mathcal{M}_1(T[\mathcal{S}_1]),\dots,\mathcal{M}_K(T[\mathcal{S}_K])$ satisfies $\max\set{\epsilon_1,\dots,\epsilon_K}$-DP at the table level.
\end{proof}

% \begin{lemma}
% \label{lem:ChildrenGen}
%     For any $\bar{\epsilon} \geq 0$, if $\kw{op} = \kw{OP.LEAF}$, $\kw{ChildrenGen}(T, \kw{op}, \widetilde{\mathcal{S}}_L, \widetilde{\mathcal{S}}_R, \bar{\epsilon})$ satisfies table-level $0$-DP;
%     otherwise, it satisfies table-level $\bar{\epsilon}$-DP.
% \end{lemma}

% \begin{lemma}
%     For any $\epsilon \geq 0$, $\kw{PrivSPN}(T, \epsilon)$ satisfies table-level $\epsilon$-DP.
% \end{lemma}

% \begin{theorem}
%   \label{thm:down-top}
%     Let $\mathcal{T}$ denote the set of all possibles subtables of $T$ such that $|\mathcal{T}_k| < 2\beta$ and $|\mathcal{T}_k.\kw{Attr}|=1$, $\forall \mathcal{T}_k \in \mathcal{T}$.
%     If for any $\mathcal{T}_k\in \mathcal{T}$, $\kw{PrivSPN}(T, \epsilon)$ calls $\kw{PrivSPN}(\mathcal{T}_k, \epsilon_\kw{leaf})$, $\kw{PrivSPN}(T, \epsilon)$ satisfies table-level $(\sigma(T) \cdot \epsilon_\kw{leaf})$-DP, where $\sigma(T)=2^{|attr(T)| + \frac{|T|}{\beta} - 2}\cdot |attr(T)| \cdot \frac{|T|}{\beta}$.  
% \end{theorem}

\begin{proof}[Proof of Theorem \ref{thm:down-top}]
    We proceed with the following case-by-case analysis.\\
    \textit{Case 1 ($|attr(T)| = 1$):}
    Because $\epsilon = \epsilon_\kw{leaf}$, and $\sigma(T) = 1$, $\kw{PrivSPN}(T, \epsilon)$ satisfies $(\epsilon_\kw{leaf} \cdot \sigma(T))$-DP. \\
    \textit{Case 2 ($|attr(T)| > 1$):}
    Because $\epsilon_\kw{eval} + \epsilon_\kw{op} \leq \epsilon / \sigma(T)$ and $\epsilon_\kw{eval} + \epsilon_\kw{op} + \bar{\epsilon} = \epsilon$, we have
    \begin{align*}
        &\bar{\epsilon} = \epsilon - (\epsilon_\kw{eval} + \epsilon_\kw{op}) \geq \epsilon - \epsilon / \sigma(T)  \Rightarrow \epsilon \leq \frac{\sigma(T)}{\sigma(T)-1}\cdot \bar{\epsilon}. 
    \end{align*}
    When $\kw{op} = \kw{OP}.\kw{SUM}$, given that $\bar{\epsilon} = \epsilon_L$, we have
    \begin{align*}
        \epsilon \leq \frac{\sigma(T)}{\sigma(T)-1}\cdot \bar{\epsilon} = \frac{\sigma(T)}{\sigma(T)-1}\cdot \epsilon_L \leq \frac{\sigma(T)}{\sigma(T[\widetilde{\mathcal{S}}_L])} \cdot \epsilon_L. 
    \end{align*}
    When $\kw{op} = \kw{OP}.\kw{PRODUCT}$, given that $\bar{\epsilon} = \frac{\sigma(T[\widetilde{\mathcal{S}}_L])+\sigma(T[\widetilde{\mathcal{S}}_R])}{\sigma(T[\widetilde{\mathcal{S}}_L])} \cdot \epsilon_L$, we have
    \begin{align*}
        \epsilon & \leq \frac{\sigma(T)}{\sigma(T)-1}\cdot \bar{\epsilon} = \frac{\sigma(T)}{\sigma(T)-1}\cdot \frac{\sigma(T[\widetilde{\mathcal{S}}_L])+\sigma(T[\widetilde{\mathcal{S}}_R])}{\sigma(T[\widetilde{\mathcal{S}}_L])} \cdot \epsilon_L \\
        & = \frac{\sigma(T)}{\sigma(T[\widetilde{\mathcal{S}}_L])} \cdot \frac{\sigma(T[\widetilde{\mathcal{S}}_L])+\sigma(T[\widetilde{\mathcal{S}}_R])}{\sigma(T)-1} \cdot \epsilon_L \\
        & = \frac{\sigma(T)}{\sigma(T[\widetilde{\mathcal{S}}_L])} \cdot \frac{2|T[\widetilde{\mathcal{S}}_L]|\cdot |attr(T[\widetilde{\mathcal{S}}_L])| / \beta - 1 + 2|T[\widetilde{\mathcal{S}}_R]|\cdot |attr(T[\widetilde{\mathcal{S}}_R])| / \beta - 1}{2|T|\cdot |attr(T)| / \beta - 2} \cdot \epsilon_L \\
        & = \frac{\sigma(T)}{\sigma(T[\widetilde{\mathcal{S}}_L])} \cdot \frac{2|T[\widetilde{\mathcal{S}}_L]|\cdot |attr(T[\widetilde{\mathcal{S}}_L])|  + 2|T[\widetilde{\mathcal{S}}_R]|\cdot |attr(T[\widetilde{\mathcal{S}}_R])|  - 2\beta}{2|T|\cdot |attr(T)| - 2\beta} \cdot \epsilon_L \\
        & = \frac{\sigma(T)}{\sigma(T[\widetilde{\mathcal{S}}_L])} \cdot \frac{2|T|\cdot |attr(T[\widetilde{\mathcal{S}}_L])|  + 2|T|\cdot |attr(T[\widetilde{\mathcal{S}}_R])|  - 2\beta}{2|T|\cdot |attr(T)| - 2\beta} \cdot \epsilon_L \\
        & = \frac{\sigma(T)}{\sigma(T[\widetilde{\mathcal{S}}_L])} \cdot \frac{2|T|\cdot (|attr(T[\widetilde{\mathcal{S}}_L])|  + |attr(T[\widetilde{\mathcal{S}}_R])|)  - 2\beta}{2|T|\cdot |attr(T)| - 2\beta} \cdot \epsilon_L \\
            & = \frac{\sigma(T)}{\sigma(T[\widetilde{\mathcal{S}}_L])} \cdot \frac{2|T|\cdot |attr(T)|  - 2\beta}{2|T|\cdot |attr(T)| - 2\beta} \cdot \epsilon_L = \frac{\sigma(T)}{\sigma(T[\widetilde{\mathcal{S}}_L])} \cdot \epsilon_L.
    \end{align*}
    Therefore, for any table with at least two attributes, regardless of whether $\kw{op} = \kw{OP}.\kw{SUM}$ or $\kw{op} = \kw{OP}.\kw{PRODUCT}$, we have $\epsilon \leq \frac{\sigma(T)}{\sigma(T[\widetilde{\mathcal{S}}_L])} \cdot \epsilon_L$.
    Assume that the leftmost path of the binary tree returned by $\kw{PrivSPN}(T, \epsilon)$ contains $Q$ non-leaf nodes.
    For all $q \in [1, Q]$, let $S^{(q)}$ denote the left subtable produced by the non-leaf node at depth $(q-1)$ along this path.
    Consequently, we can obtain the following recursively: 
    \begin{align*}
        \epsilon \leq \frac{\sigma(T)}{\sigma(T[\widetilde{\mathcal{S}}_L])} \cdot \epsilon_L = \frac{\sigma(T)}{\sigma(S^{(1)})} \cdot \epsilon_L \leq \frac{\sigma(T)}{\sigma(S^{(1)})} \cdot \frac{\sigma(S^{(1)})}{\sigma(S^{(2)})} \cdot \dots \cdot \frac{\sigma(S^{(Q-1)})}{\sigma(S^{(Q)})} \cdot \epsilon_\kw{leaf} = \frac{\sigma(T)}{\sigma(S^{(Q)})} \cdot \epsilon_\kw{leaf} = \sigma(T) \cdot \epsilon_\kw{leaf}.
    \end{align*}
    Therefore, we conclude that $\kw{PrivSPN}(T, \epsilon)$ satisfies table-level $(\sigma(T) \cdot \epsilon_\kw{leaf})$-DP.
\end{proof}

% \begin{lemma}
%     For any SPN $t_i$, any primary key $PK_j$, and any $\epsilon \geq 0$, $\kw{PrivFanout}(T_i, t_i, PK_j, \epsilon)$ satisfies $\epsilon$-DP at the table level.
% \end{lemma}

% \begin{theorem}
% \label{thm:table-to-db-dp}
%     Given $\mathcal{M}_i$ that satisfies $\epsilon_i$-DP at the table level for all $i\in[n]$, $\mathcal{M}(D)=\big(\mathcal{M}_1(T_1),...,\mathcal{M}_n(T_n)\big)$ satisfies $\sum_{i\in[n]}\tau_i\cdot\epsilon_i$-DP at the database level, where $\tau_i$ is the maximum multiplicity of $T_i$.
% \end{theorem}

\begin{proof}[Proof of Theorem \ref{thm:table-to-db-dp}]
    % Let $T_i^{(k)}$ denote a table that differs from $T_i$ in the values of $k$ rows.
    Because each $\mathcal{M}_i$ satisfies $\epsilon_i$-DP at the table level, for any pair of neighboring databases $D, D'$ and for any possible output $o=(o_1,\dots,o_n)$ of $\mathcal{M}$, we have
    \begin{align*}
        & \Pr{\mathcal{M}(D)=o} = \prod_{i\in[n]} \Pr{\mathcal{M}_i(T_i)=o_i} \leq  \prod_{i\in[n]}  \exp(d_i\cdot \epsilon_i)\cdot \Pr{\mathcal{M}_i(T_i^{(d_i)})=o_i} =  \exp(\sum_{i\in[n]} d_i \cdot \epsilon_i) \cdot \Pr{\mathcal{M}(D')=o}
    \end{align*}
    where $T_i^{(d_i)}$ is a private table in $D'$ that differs from $T_i$ in the values of $d_i$ rows for all $i\in[n]$.
    Then, because $T_i, T_i^{(d_i)}$ differ in at most $\tau_i$ rows for all $i\in[n]$, we have
    \begin{align*}
        \Pr{\mathcal{M}(D)=o} \leq  \exp(\sum_{i\in[n]} d_i \cdot \epsilon_i) \cdot \Pr{\mathcal{M}(D')=o} \leq  \exp(\sum_{i\in[n]} \tau_i \cdot \epsilon_i) \cdot \Pr{\mathcal{M}(D')=o}
    \end{align*}
    Therefore, $\mathcal{M}(D)$ satisfies $(\sum_{i\in[n]}\tau_i\cdot\epsilon_i)$-DP at the database level.
\end{proof}

% \begin{corollary}[of Theorem \ref{thm:table-to-db-dp}]
% \label{coro:privbench}
%     \kw{PrivBench} satisfies $(\sum_{i\in[n]}\tau_i\cdot\epsilon^s_i+\sum_{T_i \text{ refers to } T_j} \tau_i\cdot\epsilon^f_i)$-DP at the database level.
% \end{corollary}

\begin{proof}[Proof of Corollary \ref{coro:privbench}]
    For any pair of neighboring databases $D, D'$, for any possible output $t_i$ of \kw{PrivSPN}, for any possible output $t_i'$ of \kw{PrivFanout}, and for any possible output $\widehat{D}=\set{\widehat{T}_1,...,\widehat{T}_n}$ of \kw{PrivBench}, we have
    \begin{align*}
        &\Pr{\kw{PrivBench}(D)=\widehat{D}}\\
    = & \prod_{i\in[n]}\Pr{\kw{PrivSPN}(T_i, \epsilon^s_i) = t_i} \cdot \prod_{T_i \text{ refers to } T_j} \Pr{\kw{PrivFanout}(T_i, t_i, FK_{i,j}, \epsilon^f_i) = t_i'} \cdot \prod_{i\in[n]}\Pr{\kw{SampleDataFromSPN}(t_i')=\widehat{D}} \\
    \leq & \prod_{i\in[n]} \exp(d_i\cdot \epsilon^s_i) \cdot \Pr{\kw{PrivSPN}(T_i^{(d_i)}, \epsilon^s_i)=t_i}  \cdot \prod_{T_i \text{ refers to } T_j} \exp(d_i\cdot \epsilon^f_i) \cdot \Pr{\kw{PrivFanout}(T_i^{(d_i)}, t_i, FK_{i,j}, \epsilon^f_i)=t_i'} \\
    & \cdot \prod_{i\in[n]}\Pr{\kw{SampleDataFromSPN}(t_i')=\widehat{D}} \\
    = & \exp(\sum_{i\in[n]}d_i\cdot \epsilon^s_i + \sum_{T_i \text{ refers to } T_j} d_i\cdot \epsilon^f_i)  \cdot \prod_{i\in[n]} \Pr{\kw{PrivSPN}(T_i^{(d_i)}, \epsilon^s_i)=t_i}  \cdot \prod_{T_i \text{ refers to } T_j} \Pr{\kw{PrivFanout}(T_i^{(d_i)}, t_i, FK_{i,j}, \epsilon^f_i)=t_i'} \\
    & \cdot \prod_{i\in[n]}\Pr{\kw{SampleDataFromSPN}(t_i')=\widehat{D}}\\
    = & \exp(\sum_{i\in[n]} d_i\cdot \epsilon^s_i + \sum_{T_i \text{ refers to } T_j} d_i\cdot \epsilon^f_i)  \cdot \Pr{\kw{PrivBench}(D')=\widehat{D}} \\
    \leq & \exp(\sum_{i\in[n]}\tau_i\cdot \epsilon^s_i + \sum_{T_i \text{ refers to } T_j} \tau_i\cdot \epsilon^f_i) \cdot \Pr{\kw{PrivBench}(D')=\widehat{D}}
    \end{align*}
    where $T_i^{(d_i)}$ is a private table in $D'$ that differs from $T_i$ in the values of $d_i$ rows for all $i\in[n]$. 
    Therefore, \kw{PrivBench} satisfies $(\sum_{i\in[n]}\tau_i\cdot\epsilon^s_i+\sum_{T_i \text{ refers to } T_j} \tau_i\cdot\epsilon^f_i)$-DP at the database level.
\end{proof}

\end{document}
\endinput

%%% Local Variables:
%%% mode: latex
%%% TeX-master: t
%%% End:

% --- supplement: online_appendix.tex ---

\title{Online Appendix: Privacy-Enhanced Database Synthesis for Benchmark Publishing}

% \author{
%   Yongrui Zhong$^{1*}$, Yunqing Ge$^{1*}$, Jianbin Qin$^1$,
%   Shuyuan Zheng$^2$,\\
%   Bo Tang$^3$, Yuxuan Qiu$^4$, Rui Mao$^1$, Ye Yuan$^4$, Makoto Onizuka$^2$, Chuan Xiao$^{2,5}$
% }
% \affiliation{
%   \normalsize{SICS, Shenzhen University$^1$; Osaka Univeristy$^2$;} \\
%   \normalsize{Southern University of Science and Technology$^3$; Beijing Institute of Technology$^4$; Nagoya University$^5$} \\
%   \normalsize{\textit{\{zhongyongrui2021@email., geyunqing2022@email.,  qinjianbin@, mao@\}szu.edu.cn},} \\
%   \normalsize{\textit{tangb3@sustech.edu.cn},} 
%   \normalsize{\textit{yuan-ye@bit.edu.cn},}
%   \normalsize{\textit{ \{zheng, onizuka, chuanx\}@ist.osaka-u.ac.jp}} 
% }
% \maketitle
% \def\thefootnote{*}\footnotetext{Both authors contributed equally to this work}\def\thefootnote{\arabic{footnote}}
% text text text\footnote{normal footnote}

\author{Yunqing Ge}
% \authornotemark[1]
\affiliation{%
  \institution{\normalsize{Shenzhen University}}
}
\email{geyunqing2022@email.szu.edu.cn}

\author{Jianbin Qin}
\authornote{Corresponding authors.}
\affiliation{%
  \institution{\normalsize{SICS, Shenzhen University}}
}
\email{qinjianbin@szu.edu.cn}

\author{Shuyuan Zheng}
\authornotemark[1]
\affiliation{%
  \institution{\normalsize{Osaka Univeristy}}
}
\email{zheng@ist.osaka-u.ac.jp}

\author{Yongrui Zhong}
% \authornote{Both authors contributed equally to this research.}
\affiliation{%
  \institution{\normalsize{Shenzhen University}}
}
\email{zhongyongrui2021@email.szu.edu.cn}

\author{Bo Tang}
\affiliation{%
  \institution{\normalsize{Southern University of Science and Technology}}
}
\email{tangb3@sustech.edu.cn}

\author{Yu-Xuan Qiu}
\affiliation{%
  \institution{\normalsize{Beijing Institute of Technology}}
}
\email{qiuyx.cs@gmail.com}

\author{Rui Mao}
\affiliation{%
  \institution{\normalsize{SICS, Shenzhen University}}
}
\email{mao@szu.edu.cn}

\author{Ye Yuan}
\affiliation{%
  \institution{\normalsize{Beijing Institute of Technology}}
}
\email{yuan-ye@bit.edu.cn}

\author{Makoto Onizuka}
\affiliation{%
  \institution{\normalsize{Osaka Univeristy}}
}
\email{onizuka@ist.osaka-u.ac.jp}

\author{Chuan Xiao}
\affiliation{%
  \institution{\normalsize{Osaka Univeristy, Nagoya University}}
}
\email{chuanx@ist.osaka-u.ac.jp}

%\author{\szu Xinwei Lin, \ant Peng Di, \osanag Chuan Xiao, \ant Alex X. Liu, \sics Rui Mao, \osaka Makoto Onizuka, \sics Jianbin Qin}

%\author{Xinwei Lin\szu, Peng Di\ant, Chuan Xiao\osaka$,$\nagoya, Alex X. Liu\ant, Rui Mao\sics, Jianbin Qin\sics}
   
% \affiliation{\szu Shenzhen University, \ant Ant Group Inc., \osaka Osaka University, \nagoya 
% Nagoya University, \sics{}SICS Shenzhen University}
   
% \email{{linxinwei2020@email., mao@, qinjianbin@}szu.edu.cn, {dipeng.dp,alexliu}@antgroup.com, {chuanx,onizuka}@ist.osaka-u.ac.jp}

% \author{Xinwei Lin\szu, Peng Di\ant,
%   Chuan Xiao\osaka, Jianbin Qin\sics \textsuperscript{\Envelope}}

% \affiliation{\sics{}Shenzhen Institute of Computing Sciences, Shenzhen
%   University, \osaka{}Osaka University, \nagoya{}Nagoya University,
%   \dgut{}Dongguan University of Technology, \unsw{}University of New South
%   Wales, \unimelb{}\href{http://ruizhang.info}{www.ruizhang.info}}

% \email{yaoshuw@sics.ac.cn, {chuanx, onizuka}@ijjst.osaka-u.ac.jp,
%   {qinjianbin,mao}@szu.edu.cn, weiw@cse.unsw.edu.au, rui.zhang@ieee.org,
%   ishikawa@i.nagoya-u.ac.jp}

% \author{
% \IEEEauthorblockN{Xinwei Lin}
% \IEEEauthorblockA{\textit{CSSE, Shenzhen University} \\
% Shenzhen, China \\
% email address or ORCID}
% \and
% \IEEEauthorblockN{Peng Di}
% \IEEEauthorblockA{\textit{Ant Group} \\
% %%\textit{name of organization (of Aff.)}\\
% Hangzhou, China \\
% email address or ORCID}
% }
% \and
% % \IEEEauthorblockN{3\textsuperscript{rd} Given Name Surname}
% % \IEEEauthorblockA{\textit{dept. name of organization (of Aff.)} \\
% % \textit{name of organization (of Aff.)}\\
% % City, Country \\
% % email address or ORCID}
% % \and
% % \IEEEauthorblockN{4\textsuperscript{th} Given Name Surname}
% % \IEEEauthorblockA{\textit{dept. name of organization (of Aff.)} \\
% % \textit{name of organization (of Aff.)}\\
% % City, Country \\
% % email address or ORCID}
% % \and
% % \IEEEauthorblockN{5\textsuperscript{th} Given Name Surname}
% % \IEEEauthorblockA{\textit{dept. name of organization (of Aff.)} \\
% % \textit{name of organization (of Aff.)}\\
% % City, Country \\
% % email address or ORCID}
% % \and

% \IEEEauthorblockN{Chuan Xiao}
% \IEEEauthorblockA{\textit{Osaka University} \\
% %\textit{name of organization (of Aff.)}\\
% Osaka, Japan \\
% mao@szu.edu.cn}

% \and

% \IEEEauthorblockN{Rui Mao}
% \IEEEauthorblockA{\textit{SICS,
%     Shenzhen University} \\
% %\textit{name of organization (of Aff.)}\\
% Shenzhen, China \\
% mao@szu.edu.cn}

% \and
% \IEEEauthorblockN{Jianbin Qin\textsuperscript{\Envelope}}
% \IEEEauthorblockA{\textit{SICS,
%     Shenzhen University} \\
% %\textit{name of organization (of Aff.)}\\
% Shenzhen, China \\
% qinjianbin@szu.edu.cn}
% }

\maketitle

% \section{Benchmark Scaling}
% In practice, users (e.g., companies) may require database benchmarks of different sizes to meet varying needs. 
% Therefore, \ours supports synthesizing a database of any size by setting a scale factor for each table therein. 
% However, determining an appropriate scale factor could be challenging for users lacking specialized knowledge, especially when their benchmarking requirements may evolve with the development of their business. 
% Thus, a potential extension for PrivBench is to analyze the trends in the user's historical query logs to predict their current needs and then automatically optimize scale factors for them. 
% This approach would facilitate more tailored and efficient benchmarking adapted to the dynamic demands of users.

\section{Missing Proofs}

% \begin{lemma}
% \label{lem:Planning}
%     For any $\epsilon\geq 0$ and any $\gamma_1 \geq 0$, if $|T| \geq 2\beta$ and $|T.\kw{Attr}| > 1$, $\kw{Planning}(T, \epsilon)$ satisfies $(\epsilon\cdot \gamma_1)$-DP;
%     otherwise, it satisfies $0$-DP at the table level.
% \end{lemma}

% \begin{proof}
%    When $|T|<2\beta$ or $|T.\kw{Attr}|=1$, \kw{Planning} satisfies $0$-DP because for any neighboring table $T'$ of $T$ and for any possible output $o$, we have
%    \begin{align*}
%        & \kw{Planning}(T, \epsilon) = \kw{Planning}(T', \epsilon) = (\kw{OP.LEAF}, \epsilon, 0) \Rightarrow  &\frac{\Pr{\kw{Planning}(T, \epsilon)=o}}{\Pr{\kw{Planning}(T', \epsilon)=o}} = 1 = \exp(0).
%    \end{align*}
   
%     When $|T| \geq 2\beta$ and $|T.\kw{Attr}| > 1$, \kw{Planning} consists of the \kw{CorrTrial}, \kw{DecideOP}, and \kw{AllocBudgetOP} subprocedures. 
%     If the subprocedures satisfy DP, we can conclude that \kw{Planning} ensures DP according to the sequential composition theorem of DP~\cite{dwork2006differential}.

%     we have
%     \begin{align*}
%         \kw{Planning}(T, \epsilon) = \big(\kw{DecideOP}(T, \kw{CorrTrial}(T, \epsilon)), \kw{AllocBudgetOP}(T, \kw{DecideOP}(T, \kw{CorrTrial}(T, \epsilon)), \epsilon, \kw{CorrTrial}(T, \epsilon)))\big)
%     \end{align*}
% \end{proof}

% \begin{lemma}
% \label{lem:ParentGen}
%     For any $\kw{op}$ and any $\epsilon_\kw{op} \geq 0$, $\kw{ParentGen}(T,\kw{op},\epsilon_\kw{op})$ satisfies $\epsilon_\kw{op}$-DP at the table level.
% \end{lemma}

\begin{theorem}[Parallel composition under bounded DP]
\label{thm:sumnode}
    % Consider $K$ algorithms $\mathcal{M}_1,\dots,\mathcal{M}_K$ that satisfy $\epsilon_1$-DP,...,$\epsilon_K$-DP, respectively.
    Given a row partition $(\mathcal{S}_1,\dots, \mathcal{S}_K)$, publishing $\mathcal{M}_1(T[\mathcal{S}_1]),\dots,\mathcal{M}_K(T[\mathcal{S}_K])$ satisfies $\max\set{\epsilon_1,\dots, \epsilon_K}$-DP at the database level, where $\mathcal{S}_k$ is a subset of row indices and $\mathcal{M}_k$ is a table-level $\epsilon_k$-DP algorithm, $\forall k \in [K]$.
\end{theorem}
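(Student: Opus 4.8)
The plan is to prove this as an instance of the parallel composition principle, specialized to the bounded-DP neighbor relation. First I would fix the relevant definition: two databases $T$ and $T'$ are neighbors at the database level under bounded DP precisely when they contain the same number of rows and differ in the content of exactly one row, say at index $i^\ast$. The crucial structural fact I would exploit is that $(\mathcal{S}_1, \dots, \mathcal{S}_K)$ is a genuine partition of the row indices, so the single differing index $i^\ast$ belongs to exactly one block $\mathcal{S}_{k^\ast}$.

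Next I would decompose the joint output. Because the $\mathcal{M}_k$ act on the disjoint sub-tables $T[\mathcal{S}_k]$ and draw independent randomness, the probability (or density) of a joint output $(o_1,\dots,o_K)$ factorizes as $\prod_{k=1}^{K}\Pr[\mathcal{M}_k(T[\mathcal{S}_k]) = o_k]$. I would then compare this factorization for $T$ against that for $T'$ block by block. For every $k \neq k^\ast$ the differing row lies outside $\mathcal{S}_k$, hence $T[\mathcal{S}_k] = T'[\mathcal{S}_k]$ and the corresponding factors are identical and cancel.

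The only surviving block is $k^\ast$. Here I would verify that restriction to the block preserves the neighbor relation at the table level: $T[\mathcal{S}_{k^\ast}]$ and $T'[\mathcal{S}_{k^\ast}]$ have the same number of rows and differ in exactly the single row $i^\ast$, so they are valid table-level neighbors. Invoking the $\epsilon_{k^\ast}$-DP guarantee of $\mathcal{M}_{k^\ast}$ bounds this factor's ratio by $e^{\epsilon_{k^\ast}}$. Multiplying back through the factorization, the overall privacy loss of the joint release is bounded by $e^{\epsilon_{k^\ast}} \le e^{\max\{\epsilon_1,\dots,\epsilon_K\}}$, which is the claim. The symmetric inequality with the roles of $T$ and $T'$ exchanged yields the two-sided bound, and the extension from pointwise output densities to arbitrary measurable output sets is routine.

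The step I expect to require the most care is confirming that the bounded-DP neighbor relation is exactly what makes the \emph{max} (rather than a sum) appear. Under unbounded DP, neighbors differ by the insertion or deletion of a row, which alters the row count and could perturb the induced partition of $T'$; the clean ``exactly one affected block'' decomposition relies on bounded DP preserving row positions so that each block's sub-table changes in at most one row while all other blocks are untouched. I would state this dependency explicitly to justify that parallel composition applies and that no budget summation across blocks is incurred.
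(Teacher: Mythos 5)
Your proposal is correct and follows essentially the same route as the paper's proof: factorize the joint output probability over blocks by independence of the mechanisms, use the partition property to isolate the single block containing the differing row (all other blocks' sub-tables being identical), apply that block's table-level $\epsilon$-DP guarantee, and bound the resulting factor by $\exp(\max\{\epsilon_1,\dots,\epsilon_K\})$. If anything, your write-up is more careful than the paper's, since you explicitly verify that restriction to a block preserves the neighbor relation and articulate why the bounded-DP (fixed row count) setting is exactly what yields a max rather than a sum.
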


\begin{proof}
Let $\mathcal{M}(T, (\mathcal{S}_1,\dots, \mathcal{S}_K))=\big(\mathcal{M}_1(T[\mathcal{S}_1]),\dots,\mathcal{M}_K(T[\mathcal{S}_K])\big)$. Given a row partition $(\mathcal{S}_1,\dots, \mathcal{S}_K)$, for neighboring tables $T, T'$ and for any possible output $o=(o_1,\dots,o_K)$ of $\mathcal{M}$, we have
\begin{align*}
    &\Pr{\mathcal{M}(T, (\mathcal{S}_1,\dots, \mathcal{S}_K))=o} = \prod_{k\in [K]} \Pr{\mathcal{M}_k(T[\mathcal{S}_k]=o_k)},\\
    &\Pr{\mathcal{M}(T', (\mathcal{S}_1,\dots, \mathcal{S}_K))=o} = \prod_{k\in [K]} \Pr{\mathcal{M}_k(T'[\mathcal{S}_k]=o_k)}.
\end{align*}
Because $(\mathcal{S}_1,\dots, \mathcal{S}_K)$ is a row partition such that $\cap_{k\in[K]}\mathcal{S}_k = \emptyset$, there exists at most one element $\mathcal{S}_j$ of the partition such that
\begin{align*}
    T[\mathcal{S}_j] \sim T'[\mathcal{S}_j] \text{, and } \forall k \neq j, &  T[\mathcal{S}_j] = T'[\mathcal{S}_j].
\end{align*}
Then, because $\mathcal{M}_j$ satisfies table-level $\epsilon_j$-DP, we have
    \begin{align*}
        \forall o_j, \Pr{\mathcal{M}_j(T[\mathcal{S}_j])=o_j} \leq \exp(\epsilon_j) \cdot \Pr{\mathcal{M}_j(T'[\mathcal{S}_j])=o_j}
    \end{align*}
Therefore, given a row partition $(\mathcal{S}_1,\dots, \mathcal{S}_K)$, for any pair of neighboring tables $T, T'$ and for any possible output $o=(o_1,\dots,o_K)$ of $\mathcal{M}$, we have
\begin{align*}
    &\Pr{\mathcal{M}(T, (\mathcal{S}_1,\dots, \mathcal{S}_K))=o} \\
    = & \prod_{k\in [K]} \Pr{\mathcal{M}_k(T[\mathcal{S}_k]=o_k)}\\
    = & \Pr{\mathcal{M}_j(T[\mathcal{S}_j]=o_j)} \cdot \prod_{k\in [K], k\neq j}  \Pr{\mathcal{M}_k(T[\mathcal{S}_k]=o_k)}\\
    \leq & \exp(\epsilon_j) \cdot \Pr{\mathcal{M}_j(T'[\mathcal{S}_j])=o_j} \cdot \prod_{k\in [K], k\neq j}  \Pr{\mathcal{M}_k(T'[\mathcal{S}_k]=o_k)} \\
    = & \exp(\epsilon_j) \cdot \prod_{k\in [K]} \cdot \Pr{\mathcal{M}_k(T'[\mathcal{S}_k]=o_k)} \\
    \leq & \max\set{\epsilon_1,\dots,\epsilon_K} \cdot \Pr{\mathcal{M}_k(T'[\mathcal{S}_k]=o_k)} \\
    = & \max\set{\epsilon_1,\dots,\epsilon_K} \cdot \Pr{\mathcal{M}(T', (\mathcal{S}_1,\dots, \mathcal{S}_K))=o}
\end{align*}
Therefore, we conclude that publishing $\mathcal{M}_1(T[\mathcal{S}_1]),\dots,\mathcal{M}_K(T[\mathcal{S}_K]$ satisfies $\max\set{\epsilon_1,\dots,\epsilon_K}$-DP at the table level.
\end{proof}

% \begin{lemma}
% \label{lem:ChildrenGen}
%     For any $\bar{\epsilon} \geq 0$, if $\kw{op} = \kw{OP.LEAF}$, $\kw{ChildrenGen}(T, \kw{op}, \widetilde{\mathcal{S}}_L, \widetilde{\mathcal{S}}_R, \bar{\epsilon})$ satisfies table-level $0$-DP;
%     otherwise, it satisfies table-level $\bar{\epsilon}$-DP.
% \end{lemma}

% \begin{lemma}
%     For any $\epsilon \geq 0$, $\kw{PrivSPN}(T, \epsilon)$ satisfies table-level $\epsilon$-DP.
% \end{lemma}

\begin{theorem}
  \label{thm:down-top}
    Let $\mathcal{T}$ denote the set of all possibles subtables of $T$ such that $|\mathcal{T}_k| < 2\beta$ and $|\mathcal{T}_k.\kw{Attr}|=1$, $\forall \mathcal{T}_k \in \mathcal{T}$.
    If for any $\mathcal{T}_k\in \mathcal{T}$, $\kw{PrivSPN}(T, \epsilon)$ calls $\kw{PrivSPN}(\mathcal{T}_k, \epsilon_\kw{leaf})$, $\kw{PrivSPN}(T, \epsilon)$ satisfies table-level $(\sigma(T) \cdot \epsilon_\kw{leaf})$-DP, where $\sigma(T)=2^{|T.\kw{Attr}| + \frac{|T|}{\beta} - 2}\cdot |T.\kw{Attr}| \cdot \frac{|T|}{\beta}$.  
\end{theorem}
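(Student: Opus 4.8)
The plan is to induct on the recursion tree that $\kw{PrivSPN}(T,\epsilon)$ unfolds into. Writing $a = |T.\kw{Attr}|$ for the number of attributes and $b = |T|/\beta$ for the number of row-blocks, I would first classify every internal node as one of two types: a \emph{product node}, which partitions the attributes among its children while retaining all rows (so the children are attribute-disjoint but share the same rows), or a \emph{sum node}, which partitions the rows among its children while retaining all attributes (so the children are row-disjoint). The recursion halts at a leaf exactly when the stopping criterion $|\mathcal{T}_k| < 2\beta$ and $|\mathcal{T}_k.\kw{Attr}| = 1$ is met, and by hypothesis each such leaf is processed by the call $\kw{PrivSPN}(\mathcal{T}_k,\epsilon_\kw{leaf})$, contributing budget $\epsilon_\kw{leaf}$. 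The base case of the induction is a single leaf, for which $\sigma = 2^{1+1-2}\cdot 1 \cdot 1 = 1$, so the $\epsilon_\kw{leaf}$-DP guarantee is immediate.

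For the inductive step I would pair each node type with the matching composition tool. At a product node the children share every row, so a neighboring change $T \sim T'$ touches all children at once; the correct instrument is sequential composition, and the children's budgets add. At a sum node the children are row-disjoint, so Theorem~\ref{thm:sumnode} (parallel composition) applies and the node's budget is the maximum over its children — but only if the partition is held fixed. I would note that under this idealized, structure-fixed accounting the bound is merely polynomial ($\le ab\cdot\epsilon_\kw{leaf}$, or even $a\cdot\epsilon_\kw{leaf}$ if one exploits the max at every sum node), so the exponential in the statement must come from elsewhere.

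The main obstacle — and the source of the factor $2^{a+b-2}$ — is precisely that the partition chosen at each internal node is \emph{data-dependent}: a single-row neighboring change can reshape the tree rather than merely perturb one leaf, so Theorem~\ref{thm:sumnode} cannot be invoked verbatim because $T$ and $T'$ need not induce the same row partition. My approach is to fix the realized output structure and bound, by a group-privacy argument, how far a single-record modification can propagate through the binary splits. Reducing $a$ attributes and $b$ row-blocks down to a single-attribute, single-block leaf requires at most $(a-1)+(b-1)=a+b-2$ levels of splitting, and a neighboring change can cascade into both branches at each level, yielding a factor of $2$ per level and hence the amplification $2^{a+b-2}$. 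This turns the clean max at sum nodes into an additive recursion in both dimensions; to close it I would verify that the closed form $\sigma(a,b)=2^{a+b-2}ab$ dominates that recursion, i.e.\ $\sum_i \sigma(\mathcal{T}_i)\le \sigma(T)$ at every internal node, which reduces to the superadditivity estimate $\sum_i 2^{a_i}a_i \le 2^{a}a$ (and its row-dimension analogue) for any split $a_1+\dots+a_m=a$. Combining the structural amplification $2^{a+b-2}$ with the leaf count — at most $a\cdot(n/\beta)$ leaves, supplying the polynomial factor $|T.\kw{Attr}|\cdot\tfrac{|T|}{\beta}$ — gives total budget $\sigma(T)\cdot\epsilon_\kw{leaf}$, completing the induction.
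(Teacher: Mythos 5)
There is a genuine gap: you have misidentified where the exponential factor $2^{|T.\kw{Attr}|+|T|/\beta-2}$ comes from, and as a consequence your accounting omits the very budget that produces it. In the paper's proof the amplification has nothing to do with data-dependent partitions or group privacy. It is a pure budget-accounting phenomenon at the \emph{internal} nodes: each call $\kw{PrivSPN}(T,\epsilon)$ sequentially composes $\kw{Planning}$ (cost $\gamma_1\epsilon$), $\kw{ParentGen}$ (cost $\epsilon_\kw{op}$), and $\kw{ChildrenGen}$ (cost $\bar\epsilon$), with the invariants $\gamma_1\epsilon+\epsilon_\kw{op}+\bar\epsilon=\epsilon$ and $\bar\epsilon\ge\epsilon/2$, so the non-recursive overhead at a node is at most as large as the recursive part, and $\kw{ChildrenGen}$'s cost is in turn at most the sum of the two children's costs (with $\epsilon_l+\epsilon_r\ge\bar\epsilon$). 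Unrolling bottom-up, the actual cost of a node is at most \emph{twice} the sum of its children's actual costs, which yields the bound $\sum_{k\in[K]}2^{d_k}\epsilon_\kw{leaf}$; the theorem then follows from the depth bound $d_k\le |T.\kw{Attr}|+|T|/\beta-2$ and the leaf-count bound $K\le |T.\kw{Attr}|\cdot|T|/\beta$ (these two final steps you do have). Your proposal counts only the $\epsilon_\kw{leaf}$ spent at leaves and nothing at internal nodes, so even if your inductive bookkeeping closed, it would ``prove'' a polynomial bound that is simply not a valid privacy guarantee for this algorithm, because $\kw{Planning}$ and $\kw{ParentGen}$ genuinely consume budget at every internal node.

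Your proposed repair via cascading changes also does not produce an exponential. The partitions here are index-based outputs of earlier DP steps, so adaptive sequential composition handles the data-dependence: conditioning on the realized split $\widetilde{\mathcal{S}}_L,\widetilde{\mathcal{S}}_R$, neighboring tables $T\sim T'$ still induce subtables differing in at most one row, and a row shared by both children of a product node is handled \emph{additively} by sequential composition, not multiplicatively. Summing $\epsilon_\kw{leaf}$ over all leaves a change can touch gives at most $K\cdot\epsilon_\kw{leaf}\le |T.\kw{Attr}|\cdot\frac{|T|}{\beta}\cdot\epsilon_\kw{leaf}$ --- polynomial, exactly the obstruction you yourself observed. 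Incidentally, the paper's proof of this theorem never invokes Theorem~\ref{thm:sumnode} (parallel composition) at all; it conservatively uses sequential composition at both sum and product nodes, so your sum-node/max analysis, and the superadditivity lemma $\sum_i 2^{a_i}a_i\le 2^a a$ built on it, are not needed and do not appear.
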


\begin{proof}
    For any table $T$, the  $\kw{PrivSPN}(T, \epsilon)$ procedure that sequentially calls $\kw{Planning}(T, \epsilon)$, $\kw{ParentGen}(T, \kw{op}, \epsilon_\kw{op})$, and $\kw{ChildrenGen}(T, \kw{op}, \widetilde{\mathcal{S}}_L, \widetilde{\mathcal{S}}_R, \bar{\epsilon})$ satisfies $(\gamma_1\cdot \epsilon + \epsilon_\kw{op} + \bar{\epsilon})$-DP. 
    Because $\bar{\epsilon} \geq \frac{1}{2} \epsilon$ and $\gamma_1\cdot \epsilon + \epsilon_\kw{op} + \bar{\epsilon} = \epsilon$, $\kw{PrivSPN}(T, \epsilon)$ satisfies $(2\bar{\epsilon})$-DP.
    When $\kw{op} \neq \kw{OP.LEAF}$, because $\kw{ChildrenGen}(T, \kw{op}, \widetilde{\mathcal{S}}_L, \widetilde{\mathcal{S}}_R, \bar{\epsilon})$ calls $\kw{PrivSPN}(T[\widetilde{\mathcal{S}}_L], \epsilon_{l})$ and $\kw{PrivSPN}(T[\widetilde{\mathcal{S}}_R], \epsilon_r)$, where $\epsilon_l + \epsilon_r \geq \bar{\epsilon}$, $\kw{PrivSPN}(T, \epsilon)$ satisfies $(2\epsilon_l + 2\epsilon_r)$-DP.
    Similarly, by recursively repeating this reasoning process, we can know that $\kw{PrivSPN}(T, \epsilon)$ satisfies $(\sum_{k\in[K]} 2^{d_k}\cdot \epsilon_\kw{leaf})$-DP, where $K$ is the number of leaf nodes of $\kw{PrivSPN}(T, \epsilon)$ and $d_k$ is the depth (starting from $0$) of the $k$-th leaf node.

    Then, because we can apply at most $(|T|/\beta - 1)$ times of row splitting and at most $(|T.\kw{Attr}| - 1)$ times of column splitting to obtain any subtable $\mathcal{T}_k \in \mathcal{T}$ for a leaf node, a leaf node can at most have $(|T|/\beta - 1)$ sum nodes and at most $(|T.\kw{Attr}| - 1$ product nodes as its ancestor nodes.
    Therefore, we have
    \begin{align*}
        \forall k \in [K], d_k \leq |T.\kw{Attr}| + \frac{|T|}{\beta} - 2
    \end{align*}
    
    Additionally, because each subtable $\mathcal{T}_k$ of $T$ for a leaf node satisfies $|\mathcal{T}_k| \geq \beta$ and $|\mathcal{T}_k.\kw{Attr}| = 1$, $\kw{PrivSPN}(T, \epsilon)$ can have at most $|T.\kw{Attr}| \cdot \frac{|T|}{\beta}$ leaf nodes, i.e., $K \leq |T.\kw{Attr}| \cdot \frac{|T|}{\beta}$.
    Therefore, we have 
    \begin{align*}
        & \sum_{k\in[K]} 2^{d_k}\cdot \epsilon_\kw{leaf}  \\
        \leq & \sum_{k\in[K]} 2^{|T.\kw{Attr}| + \frac{|T|}{\beta} - 2}\cdot \epsilon_\kw{leaf} =  K \cdot 2^{|T.\kw{Attr}| + \frac{|T|}{\beta} - 2}\cdot \epsilon_\kw{leaf} \\
        \leq & |T.\kw{Attr}| \cdot \frac{|T|}{\beta} \cdot 2^{|T.\kw{Attr}| + \frac{|T|}{\beta} - 2}\cdot \epsilon_\kw{leaf} = \sigma(T) \cdot \epsilon_\kw{leaf}
    \end{align*}
    Therefore, we conclude that $\kw{PrivSPN}(T, \epsilon)$ satisfies table-level $(\sigma(T) \cdot \epsilon_\kw{leaf})$-DP.
\end{proof}

% \begin{lemma}
%     For any SPN $t_i$, any primary key $PK_j$, and any $\epsilon \geq 0$, $\kw{PrivFanout}(T_i, t_i, PK_j, \epsilon)$ satisfies $\epsilon$-DP at the table level.
% \end{lemma}

\begin{theorem}
\label{thm:table-to-db-dp}
    Given $\mathcal{M}_i$ that satisfies $\epsilon_i$-DP at the table level for all $i\in[n]$, $\mathcal{M}(D)=\big(\mathcal{M}_1(T_1),...,\mathcal{M}_n(T_n)\big)$ satisfies $\sum_{i\in[n]}\tau_i\cdot\epsilon_i$-DP at the database level, where $\tau_i$ is the maximum multiplicity of $T_i$.
\end{theorem}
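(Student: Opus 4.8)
The plan is to combine a group-privacy argument applied to each individual table with the independence of the mechanisms $\mathcal{M}_1,\dots,\mathcal{M}_n$ across tables. First I would fix an arbitrary pair of database-level neighbors $D\sim D'$ and an arbitrary output $o=(o_1,\dots,o_n)$, and observe that, since each $\mathcal{M}_i$ reads only $T_i$ and the tables are disjoint components of $D$, the joint output factorizes exactly as in the parallel-composition argument of Theorem~\ref{thm:sumnode}:
\begin{align*}
\Pr{\mathcal{M}(D)=o}=\prod_{i\in[n]}\Pr{\mathcal{M}_i(T_i)=o_i},
\qquad
\Pr{\mathcal{M}(D')=o}=\prod_{i\in[n]}\Pr{\mathcal{M}_i(T'_i)=o_i}.
\end{align*}

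The key step is to translate a single database-level change into a bounded number of row-level changes inside each table. By the definition of the maximum multiplicity $\tau_i$, the single unit distinguishing $D$ from $D'$ can contribute at most $\tau_i$ rows to $T_i$, so $T_i$ and $T'_i$ differ in at most $\tau_i$ rows. I would then invoke group privacy: construct a chain $T_i=U_0,U_1,\dots,U_{\tau_i}=T'_i$ in which consecutive tables are table-level neighbors (differing in a single row), apply the $\epsilon_i$-DP guarantee of $\mathcal{M}_i$ across each link, and telescope to obtain
\begin{align*}
\Pr{\mathcal{M}_i(T_i)=o_i}\le \exp(\tau_i\cdot\epsilon_i)\cdot\Pr{\mathcal{M}_i(T'_i)=o_i}.
\end{align*}

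Finally I would multiply these per-table bounds over $i\in[n]$ and use the factorization from the first step:
\begin{align*}
\Pr{\mathcal{M}(D)=o}
&= \prod_{i\in[n]}\Pr{\mathcal{M}_i(T_i)=o_i} \\
&\le \prod_{i\in[n]}\exp(\tau_i\cdot\epsilon_i)\cdot\Pr{\mathcal{M}_i(T'_i)=o_i} \\
&= \exp\Big(\sum_{i\in[n]}\tau_i\cdot\epsilon_i\Big)\cdot\Pr{\mathcal{M}(D')=o},
\end{align*}
which is exactly the claimed $\sum_{i\in[n]}\tau_i\cdot\epsilon_i$-DP at the database level.

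The main obstacle I expect lies in the group-privacy step, rather than in the cross-table product. I need to argue that the at-most-$\tau_i$-row difference between $T_i$ and $T'_i$ can always be realized by a chain of single-row table-level neighbors, so that the table-level $\epsilon_i$-DP guarantee is applicable at each link and the intermediate tables $U_j$ remain valid inputs to $\mathcal{M}_i$. This hinges on the table-level neighboring relation being the single-row add/remove (or replace) relation consistent with how multiplicity is defined; once that consistency is pinned down, the telescoping along the chain and the cross-table product are routine.
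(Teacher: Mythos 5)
Your proposal is correct and follows essentially the same route as the paper's proof: factorize the joint output probability across the independent per-table mechanisms, apply a group-privacy bound within each table, and observe that the number of differing rows in $T_i$ is at most the maximum multiplicity $\tau_i$. The only difference is cosmetic: you make the group-privacy telescoping chain explicit, whereas the paper invokes the bound $\Pr{\mathcal{M}_i(T_i)=o_i}\le\exp(k_i\cdot\epsilon_i)\cdot\Pr{\mathcal{M}_i(T_i^{(k_i)})=o_i}$ directly for the actual number of differing rows $k_i$ and then bounds $k_i\le\tau_i$ at the end.
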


\begin{proof}
    % Let $T_i^{(k)}$ denote a table that differs from $T_i$ in the values of $k$ rows.
    Because each $\mathcal{M}_i$ satisfies $\epsilon_i$-DP at the table level, for any pair of neighboring databases $D, D'$ and for any possible output $o=(o_1,\dots,o_n)$ of $\mathcal{M}$, we have
    \begin{align*}
        & \Pr{\mathcal{M}(D)=o}\\
        = & \prod_{i\in[n]} \Pr{\mathcal{M}_i(T_i)=o_i} \\
        \leq & \prod_{i\in[n]}  \exp(k_i\cdot \epsilon_i)\cdot \Pr{\mathcal{M}_i(T_i^{(k_i)})=o_i}\\
        = & \exp(\sum_{i\in[n]} k_i \cdot \epsilon_i) \cdot \Pr{\mathcal{M}(D')=o}
    \end{align*}
    where $T_i^{(k_i)}$ is a private table in $D'$ that differs from $T_i$ in the values of $k_i$ rows for all $i\in[n]$.
    Then, because $T_i, T_i^{(k_i)}$ differ in at most $\tau_i$ rows for all $i\in[n]$, we have
    \begin{align*}
        & \Pr{\mathcal{M}(D)=o}\\
        \leq & \exp(\sum_{i\in[n]} k_i \cdot \epsilon_i) \cdot \Pr{\mathcal{M}(D')=o} \\
        \leq & \exp(\sum_{i\in[n]} \tau_i \cdot \epsilon_i) \cdot \Pr{\mathcal{M}(D')=o}
    \end{align*}
    Therefore, $\mathcal{M}(D)$ satisfies $\sum_{i\in[n]}\tau_i\cdot\epsilon_i$-DP at the database level.
\end{proof}

\begin{corollary}[of Theorem \ref{thm:table-to-db-dp}]
\label{coro:privbench}
    \kw{PrivBench} satisfies $(\sum_{i\in[n]}\tau_i\cdot\epsilon^s_i+\sum_{T_i \text{ refers to } T_j} \tau_i\cdot\epsilon^f_i)$-DP at the database level.
\end{corollary}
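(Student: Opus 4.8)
The plan is to treat this as a direct application of Theorem~\ref{thm:table-to-db-dp}, once the per-table privacy cost of \kw{PrivBench} is pinned down. First I would unpack what \kw{PrivBench} does to each private table $T_i$: it fits a private SPN via \kw{PrivSPN} to capture the intra-table distribution, consuming table-level budget $\epsilon^s_i$, and---only when $T_i$ carries a foreign key referencing another table $T_j$---it additionally releases a differentially private summary of that reference (the distribution linking $T_i$'s rows to their parent rows in $T_j$), consuming table-level budget $\epsilon^f_i$. Tables with no outgoing reference incur only the structural cost $\epsilon^s_i$.

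Next I would combine these two mechanisms acting on the same table. Since both read $T_i$ (as opposed to operating on a disjoint row partition, where Theorem~\ref{thm:sumnode} would apply), they compose sequentially, so by the basic sequential composition theorem the combined per-table mechanism $\mathcal{M}_i$ satisfies $(\epsilon^s_i+\epsilon^f_i)$-DP at the table level when $T_i$ refers to some $T_j$, and $\epsilon^s_i$-DP otherwise. Writing this uniformly as $\epsilon_i=\epsilon^s_i+\mathbf{1}[\,T_i \text{ refers to some } T_j\,]\cdot\epsilon^f_i$, each $\mathcal{M}_i$ is a table-level $\epsilon_i$-DP algorithm, so that $\mathcal{M}(D)=(\mathcal{M}_1(T_1),\dots,\mathcal{M}_n(T_n))$ matches the hypothesis of Theorem~\ref{thm:table-to-db-dp} exactly.

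I would then invoke Theorem~\ref{thm:table-to-db-dp} directly to obtain database-level $\sum_{i\in[n]}\tau_i\epsilon_i$-DP, and finally split the sum along the indicator:
\[
\sum_{i\in[n]}\tau_i\epsilon_i=\sum_{i\in[n]}\tau_i\epsilon^s_i+\sum_{T_i\text{ refers to }T_j}\tau_i\epsilon^f_i,
\]
which is precisely the claimed budget. Note that the same multiplicity factor $\tau_i$ scales both components, since both mechanisms live on $T_i$; this is handled automatically by feeding the summed per-table budget into the theorem.

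The main obstacle I anticipate is not the composition algebra but justifying the two structural claims feeding into it: (i) that the foreign-key summary released for $T_i$ is genuinely table-level $\epsilon^f_i$-DP with respect to changes in $T_i$'s \emph{own} rows, and (ii) that this cost is charged only to the referring table $T_i$ and never to the referenced table $T_j$---which is exactly why the second sum ranges over referring tables alone. I would need to confirm from \kw{PrivBench}'s construction that modifying a row of $T_j$ does not by itself trigger a re-release of any $T_j$-side reference statistic, so that $T_j$'s table-level budget remains $\epsilon^s_j$ and the accounting does not double-count shared references.
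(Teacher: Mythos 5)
Your proposal is correct, and it is arguably more faithful to the label ``corollary'' than the paper's own proof: the paper never invokes Theorem~\ref{thm:table-to-db-dp} as a black box, but instead re-derives the bound inline, expanding the output probability of \kw{PrivBench} into a product of \kw{PrivSPN} factors, \kw{PrivFanout} factors, and \kw{SampleDataFromSPN} factors, bounding the first two kinds by $\exp(k_i\cdot\epsilon^s_i)$ and $\exp(k_i\cdot\epsilon^f_i)$ across neighboring databases, and finally using $k_i \leq \tau_i$. Your route---first composing \kw{PrivSPN} and \kw{PrivFanout} on each table (adaptively, since \kw{PrivFanout} takes the SPN output $t_i$ as a parameter) to obtain a single table-level $\epsilon_i$-DP mechanism $\mathcal{M}_i$, then feeding these into Theorem~\ref{thm:table-to-db-dp}---is more modular and makes the corollary relationship genuine. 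Three small patches would complete it. First, \kw{PrivBench} releases the sampled synthetic database, not the pair $(t_i, t_i')$, so you need one more line observing that \kw{SampleDataFromSPN} is post-processing of the DP outputs and therefore costs nothing; in the paper's computation this appears as the sampling factors being identical for $D$ and $D'$ and passing through the inequality unchanged. Second, your uniform budget $\epsilon_i = \epsilon^s_i + \mathbf{1}[T_i \text{ refers to some } T_j]\cdot\epsilon^f_i$ silently assumes each table has at most one foreign key; in general it should be $\epsilon^s_i$ plus $\epsilon^f_i$ times the number of outgoing references of $T_i$, which is what splits into the pairwise sum $\sum_{T_i \text{ refers to } T_j}\tau_i\cdot\epsilon^f_i$ of the statement. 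Third, your concern about the referenced table $T_j$ is legitimate, and the paper resolves it exactly the way you anticipate it must be resolved: in its proof, $PK_j$ is held fixed between the neighboring databases, i.e., $\kw{PrivFanout}(T_i, t_i, PK_j, \epsilon^f_i)$ is treated as a mechanism whose only private input is $T_i$, so no fanout budget is ever charged to $T_j$; your proof can simply make that same assumption explicit.
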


\begin{proof}
    For any pair of neighboring databases $D, D'$, for any possible output $t_i$ of \kw{PrivSPN}, for any possible output $t_i'$ of \kw{PrivFanout}, and for any possible output $\widehat{D}=\set{\widehat{T}_1,...,\widehat{T}_n}$ of \kw{PrivBench}, we have
    \begin{align*}
        &\Pr{\kw{PrivBench}(D)=\widehat{D}}\\
    = & \prod_{i\in[n]}\Pr{\kw{PrivSPN}(T_i, \epsilon^s_i) = t_i} \cdot \prod_{T_i \text{ refers to } T_j} \Pr{\kw{PrivFanout}(T_i, t_i, PK_j, \epsilon^f_i) = t_i'}  \cdot \prod_{i\in[n]}\Pr{\kw{SampleDataFromSPN}(t_i')=\widehat{D}} \\
    \leq & \prod_{i\in[n]} \exp(k_i\cdot \epsilon^s_i) \cdot \Pr{\kw{PrivSPN}(T_i^{(k_i)}, \epsilon^s_i)=t_i} \cdot \prod_{T_i \text{ refers to } T_j} \exp(k_i\cdot \epsilon^f_i) \cdot \Pr{\kw{PrivFanout}(T_i^{(k_i)}, t_i, PK_j, \epsilon^f_i)=t_i'} \\
    & \cdot \prod_{i\in[n]}\Pr{\kw{SampleDataFromSPN}(t_i')=\widehat{D}} \\
    = & \exp(\sum_{i\in[n]}k_i\cdot \epsilon^s_i + \sum_{T_i \text{ refers to } T_j} k_i\cdot \epsilon^f_i) \cdot \prod_{i\in[n]} \Pr{\kw{PrivSPN}(T_i^{(k_i)}, \epsilon^s_i)=t_i} \cdot \prod_{T_i \text{ refers to } T_j} \Pr{\kw{PrivFanout}(T_i^{(k_i)}, t_i, PK_j, \epsilon^f_i)=t_i'} \\
    & \cdot \prod_{i\in[n]}\Pr{\kw{SampleDataFromSPN}(t_i')=\widehat{D}}\\
    = & \exp(\sum_{i\in[n]} k_i\cdot \epsilon^s_i + \sum_{T_i \text{ refers to } T_j} k_i\cdot \epsilon^f_i) \cdot \Pr{\kw{PrivBench}(D')=\widehat{D}} \leq  \exp(\sum_{i\in[n]}\tau_i\cdot \epsilon^s_i + \sum_{T_i \text{ refers to } T_j} \tau_i\cdot \epsilon^f_i) \cdot \Pr{\kw{PrivBench}(D')=\widehat{D}}
    \end{align*}
    where $T_i^{(k_i)}$ is a private table in $D'$ that differs from $T_i$ in the values of $k_i$ rows for all $i\in[n]$. 
    Therefore, \kw{PrivBench} satisfies $(\sum_{i\in[n]}\tau_i\cdot\epsilon^s_i+\sum_{T_i \text{ refers to } T_j} \tau_i\cdot\epsilon^f_i)$-DP at the database level.
\end{proof}

\section{Auxiliary Information}
Table \ref{tab:datasets} summarizes the datasets, workloads, and baselines used in our experiments, and Table \ref{tab:notation} lists all important notations used throughout the paper.

\begin{table*}[ht]
% \small
\caption{Summary of datasets, workloads, and baselines.}
\label{tab:datasets}
\begin{tabular}{@{}cccccc@{}}
\toprule
\textbf{Dataset} & \textbf{\#Tables} & \textbf{\#Rows} & \textbf{\#Attributes} & \textbf{Workload} & \textbf{Baselines} \\ \midrule
Adult & 1 & 48842 & 13 & SAM-1000 & \begin{tabular}[c]{@{}c@{}}SAM (non-DP), VGAN (non-DP),\\ AIM, DPSynthesizer, CGAN, PrivSyn, \\ DataSyn, PreFair, MST, PrivMRF\end{tabular} \\ \midrule
California & 2 & 2.3M & 28 & MSCN-400 & SAM (non-DP), PrivLava \\ \midrule
JOB-light & 6 & 57M & 15 & \begin{tabular}[c]{@{}c@{}}JOB-70, \\ MSCN-400\end{tabular} & SAM (non-DP), DPSynthesizer \\ \bottomrule
\end{tabular}
\end{table*}

\begin{table}[ht]
\small
\caption{Important notations.}
\label{tab:notation}
\begin{tabular}{@{}cc@{}}
\toprule
Notation & Meaning \\ \midrule
$D$ & Original database \\
$\hat{D}$ & Synthetic database \\
$T$ & Original table \\
$\hat{T}$ & Synthetic table \\
$r_i$ & A record in table $T_i$ \\
$D, D'$ & Neighboring databases \\
$\mathcal{S}$ & Subset of row (or column) indices \\
$t/t'$ & An SPN tree with/without fanout \\
$\epsilon$ & Total privacy budget \\
$\kw{op}$ & Operation deciding the type of the parent node \\
$\epsilon_\kw{op}, \epsilon_L, \epsilon_R$ & Privacy budgets for parent, left child, and right child \\
$\epsilon_\kw{op}$ & Privacy budget for the parent node \\
$\bar{\epsilon}$ & Remaining privacy budget \\
$\tilde{S}$ & Perturbed subset of row (or column) indices \\
$H$ & Histogram of a table \\
$\epsilon^s_i$ & Privacy budget for SPN construction for table $i$ \\
$\epsilon^f_i$ & Privacy budget for fanout construction for table $i$ \\
$\epsilon_\kw{eval}$ & Privacy budget for correlation trial \\
$\alpha$ & threshold for column splitting \\
$\beta$ & minimum table size \\
$\gamma_1$ & budget ratio for correlation trial \\
$\gamma_2$ & budget ratio for column splitting in correlation trial \\
$\delta_\kw{col}$ & Correlation between vertically split tables \\
$\Delta$ & Global sensitivity \\
$PK_i$ & Primary key of table $T_i$ \\
$FK_{i,j}$ & Foreign key of table $T_i$ that refers to table $T_j$ \\
$\mathcal{F}_{i, j}$ & Fanout table of table $T_i$ that refers to $T_j$ \\ \bottomrule
\end{tabular}
\end{table}
%\clearpage
% \bibliographystyle{abbrv}
% \bibliography{validate}

\clearpage